\documentclass[journal]{new-aiaa}

\usepackage[utf8]{inputenc}
\usepackage{textcomp}
\usepackage{graphicx}
\usepackage{amsmath}
\usepackage{isomath}
\usepackage{algorithm}
\usepackage{algpseudocode}
\usepackage{comment}
\usepackage[version=4]{mhchem}
\usepackage{siunitx}
\usepackage{longtable,tabularx}
\usepackage{multirow,multicol}
\usepackage{subcaption}
\usepackage{setspace}

\newtheorem{definition}{Definition}
\newtheorem{theorem}{Theorem}
\newtheorem{lemma}[theorem]{Lemma}

\newenvironment{proof}{\paragraph{Proof.}}{\hfill\ensuremath{\square}}

\title{An Evidential Reasoning Approach for Aerial Target Classification and Intent Prediction}

\author{Tenzing Thiley Bhutia \footnote{M.S Research Scholar, Department of Aerospace Engineering, tenzinthinley561@gmail.com.}, Subash Kumaraguru \footnote{Ph.D Candidate, Department of Aerospace Engineering,  subashkumara1999@gmail.com.}}
\author{Devaprakash Muniraj\footnote{Assistant Professor, Department of Aerospace Engineering, deva@smail.iitm.ac.in.}}
\affil{Department of Aerospace Engineering, Indian Institute of Technology, Madras-600036, India}
\begin{document}

\maketitle

\begin{abstract}
Timely classification and intent prediction of aerial targets is crucial for a combat aircraft to make informed tactical decisions. The prevailing approach for aerial target classification relies on data-driven models using time-series data. These models perform well with long-duration data; however, this is impractical in combat situations involving rapidly evolving threats that demand quick decisions. Minimizing false predictions is essential, as uncertainty is preferable to incorrect assessments in high-risk environments. Here, we propose an integrated approach to target classification and intent prediction that enables decisions from partial data in settings where threats require rapid response. In the proposed method, predictions are generated from short sequential sub-samples instead of the entire time series, and the results are refined by propagating beliefs across sub-samples. Outputs from classifiers are combined through an evidential reasoning framework to manage uncertainty. Target intent is  inferred using rule-based techniques and a distance-based combination method to fuse information over time.
Due to lack of publicly available datasets, a dataset for aerial target classification was generated for evaluation. A case study involving eight targets is used to demonstrate the effectiveness of the approach, whereby accuracies of 88\% and 93\% are achieved for target type classification and intent prediction, respectively. 
\end{abstract}

\section{Introduction}
\label{sec:intro}
Modern aerial combat environments are characterized by high-speed maneuvers, evolving threats, and the necessity for rapid decision making, thereby making accurate and timely identification of aerial targets, along with predicting their intent, a crucial requirement. For combat aircraft, maintaining situational awareness of surrounding aerial targets offers a decisive tactical edge during combat engagements. These targets can range from multi-role fighters and bombers to commercial aircraft and missiles, each potentially displaying a spectrum of behaviors, including hostile, non-hostile, or suspicious actions toward the ownship.

The problem of aerial target classification is addressed by analyzing prior knowledge and real-time information collected from various sensors on board the ownship. The input data is in the form of a time series and includes parameters of the target such as position, altitude, velocity, climb rate, turn rate, acceleration, etc. Together, these parameters create a vector-valued time series that is used to classify the identified target into one of several predefined categories of aerial targets. Each type of target - be it a highly maneuverable multi-role fighter, a large payload-carrying bomber, a civilian commercial aircraft, or a supersonic missile - exhibits distinct characteristics that aid in target classification. Some of these targets could exhibit hostile behavior towards the ownship. In this paper, we present an integrated approach for aerial target classification and intent prediction that employs evidential reasoning. The method involves a two-step hierarchical process. In the first step, the system classifies the target type, such as a multirole fighter or a civilian commercial airliner etc. The second step determines the target’s intent or activity by utilizing information about the identified target type and its relative position to ownship. 

The target classification module, which constitutes the first step, takes time-series measurements as input and generates class probabilities for the identified aerial targets. Formally, a time-series sample is represented by $T = \{p_1, p_2, p_3, \cdots, p_t\}$, where $t$ is the number of time steps and $p_i$ is the feature vector pertaining to time instance~$i$. The underlying application determines the nature and dimensionality of the feature vector.  
The task is modeled as a multi-dimensional time series classification problem that uses attributes such as target position \((x, y, z)\), velocity \((u_x, u_y, u_z)\), acceleration \((a_x, a_y, a_z)\), turn rate \((\dot{\phi})\), and radar cross-section (RCS) to classify the target type. The predictions are done on sub-samples of small duration and an ensemble classification approach using multiple classifiers is employed. A belief distribution is computed from the output of each classifier, and the beliefs are combined across classifiers as well as sub-samples using a modified belief combination technique. The target intent prediction module, representing the second step, identifies the intent of the target relative to the ownship. Belief predictions from the target classification module for each sub-sample are first converted into membership values, enabling categorization of the target based on its capabilities and typical mission profiles. Using a set of predefined rules, new labels are assigned by integrating target type information with target behavior. A standard rule-based decision tree classifier is then used to classify target intent. Finally, a distance-based combination technique is proposed to combine the belief distributions over time, providing a robust prediction of target type and intent.

We demonstrate the effectiveness of this approach through a case study that involves classifying eight different aerial targets exhibiting three types of behavior (intent) towards the ownship: hostile, non-hostile, and suspicious. The experimental results show a classification accuracy of 88\% for target type identification and 93\% for target intent prediction. Image-based target type classification \cite{wu2020benchmark,zhang2022information} in combat is limited by sensor constraints, occlusions, and the difficulty of generating realistic datasets. Time-series kinematic data captures dynamic target behavior more reliably. Hence, to the best of our knowledge, this study is the first to implement a time-series-based approach using target kinematic and dynamic features for target type classification in combat environments.
The key contributions of this work are as follows:
\begin{enumerate}
     \item We introduce a time-series-based approach using kinematic features for target type classification, which, to the best of our knowledge, has not been explored in prior works. 
     \item We proposed an early classification framework that utilizes sub-samples of the target’s kinematic trajectory data, rather than relying on the entire time series as is common in the existing works on target classification. This approach enables faster decision-making and helps reduce the pilot’s reaction time. 
     \item We developed an ensemble approach for aerial target classification that integrates predictions from multiple standard machine learning (ML) classifiers as well as across time. 
     We present a new approach to reduce misclassification when combining predictions from multiple classifiers and across time. When the predictions are conflicting, rather than making a potentially incorrect classification, our method redistributes belief to a set of classes that reflect uncertainty in the prediction. This ensures that when the evidence is inconclusive, the approach makes a prediction with uncertainty instead of making an erroneous classification.
     \item We proposed a rule-based classifier to predict the intent of aerial targets, categorizing them as hostile, non-hostile, or suspicious. This module leverages target type predictions from the target classification module, along with the kinematics of the target relative to the ownship.
     \item In combat scenarios where target intent can shift rapidly, the well-established Dempster’s rule of combination often struggles to handle highly conflicting evidence effectively. To address this limitation, we proposed a combination technique that incorporates a distance metric to quantitatively assess the degree of conflict between sources. When this metric exceeds a predefined threshold, our method adaptively modifies the combination rule to better manage the conflict. By doing so, we prevent the propagation of incorrect predictions over time, thereby resulting in reliable information fusion in dynamic environments. 
    
\end{enumerate}

The remainder of this paper is organized as follows. In Section \ref{sec:related}, we review related works on time-series classification and combination techniques for aerial target classification and other related applications. Section \ref{sec::classification} provides background information on time-series classification and combination techniques. Section \ref{sec:proposed_approach} describes the proposed integrated target classification and intent prediction approach. 
Section \ref{sec:data_generation} outlines the dataset used in this paper, along with the procedures for data generation and preprocessing. Section \ref{sec::experimental_evaluation} presents the findings from the experimental evaluation. Finally, Section~\ref{sec:conclusion_and_future_works} discusses conclusions and suggestions for future work.

\section{Related Work}
\label{sec:related}
Existing research on time series classification for aerial applications can be broadly categorized based on the primary classification objective. Authors in \cite{zhang2022information,wu2020benchmark} focus on classifying aerial target types using image features, while \cite{wang2023quick,zhang2024target,wang2024intelligent,singh2014dynamic} emphasize classifying the intent of aerial targets using kinematic features. These studies, however, have been conducted under diverse experimental conditions and utilize non-publicly accessible time-series datasets, thereby limiting reproducibility. \cite{wu2020benchmark,zhang2023combat} leverage images as features for target type classification. However, image-based methods can be unreliable in combat environments due to limitations of sensors, issues with occlusions, and the challenges of generating realistic data under dynamic conditions.
 Despite the progress made in these distinct areas, there is a notable absence of an integrated framework that jointly addresses target classification and intent prediction. This gap highlights the need for a comprehensive approach that can effectively unify these two critical aspects of aerial target classification.
 
In aerial target classification, there could be inherent uncertainty due to missing or ambiguous information from multiple sensors \cite{frey2018f}. Further, the time taken for classification given an input time-series, called the {\textit{response time}, should be low for time-critical applications such as air combat. In the present work, we address the aforementioned issues and present an integrated aerial target classification and intent prediction algorithm. Classification, when performed on partial time-series data, is known as \textit{early classification} \cite{gupta2020approaches}. A popular approach to solving the early classification problem is through the use of shapelets \cite{ye2009time}. Shapelet transform \cite{hills2014classification,aldhanhani2019framework} converts time-series data into local-shape pattern space, with each feature denoting the pattern match score/distance between a time series and a shapelet. Other approaches for early classification include dictionary-based approaches \cite{schafer2015boss} that transform raw data to another domain based on frequency, using Fourier transform, Cosine transform, etc. However, these frequency-domain and shapelet-based methods are computationally expensive and are not suitable for onboard implementation on an aircraft. By making use of sub-samples rather than a time series of long duration for prediction, the proposed approach makes an early prediction with only a short time duration of time-series data, and the predictions are continuously refined as new sub-samples are fed to the algorithm. In other words, the prediction at the current time not only considers the most recent sub-sample but also information from all the past sub-samples. 

Approaches that employ standard ML methods have been extensively applied for target classification. Authors in \cite{SOUZA2018198} make use of instance-based learning, such as K-Nearest Neighbour (KNN), where the class label is classified based on a similarity or distance measure. Though Dynamic Time Warping (DTW) and its variants have been shown to be the best method theoretically, it is computationally demanding and might not be suitable for near-real-time applications such as aerial target classification. Recently, decision trees \cite{visser2003using} and their variants have been extensively used in predicting the behavior or strategies of one or more opponents in adversarial environments. Ginoulhac et al. \cite{ginoulhac2019target} used the gradient boosting algorithm to classify aerial targets. While the approach presented in this work is agnostic to the specific type of classifier, herein, Feed Forward Neural Networks (FFNN) \cite{fukushima2018online}, Support Vector Machine (SVM) \cite{dihel2023classifying,hui2017dempster}, Long Short Term Memory (LSTM) \cite{zhang2022information,cemiloglu2025handling} and Decision Tree \cite{gupta2019rule} have been used for their simplicity and complementary advantages. The output from each classifier is a probability distribution over the set of target classes. 

To handle uncertainty in the classifier prediction, each classifier is designed to generate a probability distribution over a subset of the power set of the output classes instead of a single class point estimate. These distributions are then combined using a combination rule based on Dempster-Shafer (DS) theory \cite{zhang2024target,kessentini2015dempster,oukhellou2010fault} to create a final composite belief. Since Dempster's combination rule is found to produce counterintuitive results due to conflict among the sources, Yager's rule \cite{yager1987dempster} has been proposed to redistribute the conflicting beliefs to the overall uncertainty, offering a more robust combination strategy. However, this often leads to a lack of actionable insights when it comes to applications such as aerial target classification. In the approach proposed herein, the conflict is reallocated to specific focal sets representing distinct hypotheses, as multiple aerial targets may share similar features. This approach aids in better decision-making by shifting the uncertainty to the focal sets while highlighting uncertainties. In situations where a target's intent evolves over time, naive application of existing belief combination rules will lead to incorrect conclusions due to the combination of conflicting temporal evidence. To address this issue, we propose a method that utilizes a distance metric \cite{jousselme2001new} to quantify the differences between the evidence. If the distance metric remains within a predefined range, belief combination continues, improving belief updates and enhancing confidence in target intent prediction. If the metric exceeds a predefined threshold, belief combination is stopped and a new evidence set is initialized to prevent misclassification. By making use of early prediction from partial data, and evidence combination across multiple sources and across time, the proposed approach strikes a balance between response time and accuracy of prediction. Military target intention recognition studies \cite{wang2023quick,zhang2024target,singh2014dynamic,wang2024intelligent} use highly specialized, custom datasets created from different simulated battle environments with unique features and targets. Each dataset differs greatly in how data is collected, the features measured, and how target behaviors or intents are defined and labeled. Due to these differences, models designed for one dataset heavily depend on its specific input features, data formats, and label definitions. At the same time, there is no common, publicly available benchmark dataset used across these studies, making direct comparisons difficult. Thus, prior work in this area \cite{wang2023quick,zhang2024target,singh2014dynamic,wang2024intelligent} typically involves comparing the proposed approach only against baseline methods adapted or re-implemented for their own specific datasets, rather than directly comparing it against other works. 

\section{Preliminaries}
\label{sec::classification}
\subsection{Dempster-Shafer Theory} \label{prelim_DStheory}
Dempster-Shafer Theory (DST) is a mathematical framework for analyzing evidence. The foundational work on this subject is detailed in \cite{shafer1976mathematical}. In traditional probability theory, evidence is linked to a single possible event. In contrast, DST allows evidence to be associated with multiple potential events, such as sets of events. This means that evidence in DST can operate at a higher level of abstraction without requiring specific assumptions such as the events to reside within the evidential set. When the evidence is sufficient to assign probabilities to individual events, the Dempster-Shafer model simplifies and aligns with traditional probability theory. A key feature of DST is its ability to handle varying levels of precision in information, without necessitating additional assumptions for representation. 
There are three important functions in DST: the basic belief assignment function (\(bba\) or \(m\)), the Belief function \(Bel\), and the Plausibility function \(Pl\). Since the proposed approach relies on these functions, they are defined in the sequel for completeness. \\
Let $\Omega$ be the universe of all the $N$ hypothesis, i.e., $\Omega=\left\{X_1, X_2 ,\ldots, X_N\right\}$. $\Omega$ is also called the frame of discernment. 
Given $\Omega$, we can define a power set $2^{\Omega}$ that contains all possible subsets of $\Omega$, including the empty set $\emptyset$.

\begin{definition}[Basic Belief Assignment $(BBA)$]
The theory of evidence assigns to each element of the power set $2^{\Omega}$, a belief mass, denoted by $m$, called basic belief assignment. Formally, $m$ is defined as
\begin{equation}
m: 2^{\Omega} \rightarrow[0,1], \ \textrm{such that} \quad m(\emptyset)=0 \quad \textrm{and} \quad
\sum_{A \in 2^{\Omega}} m(A)=1. \label{eqn:bba_prop}
\end{equation} 
\end{definition}

Given two BBAs $\boldsymbol{m}_1$ and $\boldsymbol{m}_2$, the combined belief, denoted as $m_1 \underset{DS}{\oplus} m_2$, can be written using Dempster's rule of combination as
\begin{equation}
m_1 \underset{DS}{\oplus} m_2 = \frac{\sum_{B \cap C=A \neq \emptyset} m_1(B) \cdot m_2(C)}{1-K}, \ \forall A \subseteq \Omega, \  \textrm{where} \ K=\sum_{B \cap C=\emptyset} m_1(B) \cdot m_2(C).
\label{eqn:dempster_combi}
\end{equation} 
In the preceding expression, $K$ is a measure of conflict between the two given mass sets.  However, Dempster's rule will produce counter-intuitive results when the conflict is high. To overcome this issue, the Yager combination
rule \cite{yager1987dempster} was proposed, whereby the conflicting belief is moved to $\Omega$ as shown below
\begin{equation}
\begin{aligned}
m_1 \underset{YG}{\oplus} m_2(A) & =\sum_{B \cap C =A} m_1\left(B\right) m_2\left(C\right) \ \textrm{for}\ A \neq \emptyset \\
m_1 \underset{YG}{\oplus} m_2(\Omega) & =\sum_{B \cap C=\Omega} m_1\left(B\right) m_2\left(C\right)+K, \ \textrm{where}\  K =\, \sum_{B \cap C=\emptyset} m_1\left(B\right) m_2\left(C\right).
\end{aligned}
\label{eqn:yager_combi}
\end{equation}

A common method for quantifying the degree of conflict between pieces of evidence is through the evidence distance, which is defined as follows.

\begin{definition}[Evidence Distance]
Suppose that $m_1$ and $ m_2$ are two BBAs defined on $\boldsymbol{\Theta}$, then the evidence distance between $m_1$ and $m_2$ is defined as 
\begin{equation}
d\left(m_1, m_2\right)=\sqrt{\frac{1}{2}\left(m_1-m_2\right)^T D\left(m_1-m_2\right)}
\label{eqn:evidence_distance}
\end{equation}
where $\boldsymbol{m}_1$ and $\boldsymbol{m}_2$ are column vectors of length $2^{\boldsymbol{N}}$ corresponding to $\boldsymbol{m}_1$ and $\boldsymbol{m}_2$, respectively, and $N$ is the size of the set $\boldsymbol{\Theta}$. $D$ is a $2^N \times 2^N$ matrix whose elements are defined as
$$
D(A,B)=\frac{|A \cap B|}{|A \cup B|}, \ \textrm{where} \ A \subseteq \Theta  \ \text{and} \  B \subseteq \Theta.
$$
\end{definition}

The transferable belief model (TBM) represents beliefs using belief functions or BBAs instead of classical probability distributions \cite{smets2008transferable}. However, when a decision must be made, it is desirable that these beliefs are expressed as probabilities so that they can be integrated into the decision-making frameworks. This conversion is achieved by the pignistic transformation, which converts the BBA into a probability distribution known as the pignistic probability. 

\begin{definition}[Pignistic Probability]
Give a basic belief assignment $\boldsymbol{m}$, its corresponding pignistic probability, denoted by $betp$, is defined as 
\begin{equation}
betp(\{\omega\})=\sum_{A \subseteq \Omega, \ \omega \in A} \frac{m(A)}{|A|(1-m(\emptyset))}, \  \forall \omega \in \Omega 
\label{eqn:betp}
\end{equation}
where \(A\) is a \textit{focal set}, \(|A|\) is its \textit{cardinality}, and \((1 - m(\emptyset))\) is a \textit{normalization factor} that accounts for conflict. \(\omega\) is a \textit{singleton element} in \(\Omega\).
\end{definition}

\subsection{Fuzzy set theory}
Fuzzy set theory, introduced in \cite{zadeh1965fuzzy}, extends classical set theory by incorporating the principle of partial membership, allowing elements to belong to a set in varying degrees. This enables effective handling of uncertainty and imprecision, which is especially valuable in engineering applications where precise reasoning and sharp classifications are often impractical.

A fuzzy set $\tilde{A}$ over a universe of discourse $X = \{x_1, x_2, \ldots, x_n\}$ is defined through a membership function $\mu_{\tilde{A}}$ and is represented as
\begin{equation}
\tilde{A} = \{ \langle x, \mu_{\tilde{A}}(x) \rangle \mid x \in X \}, \ \textrm{where} \ \mu_{\tilde{A}}: X \to [0, 1]. 
\label{eqn:fuzzy_logic}
\end{equation}
$\mu_{\tilde{A}}(x)$ indicates the degree to which each element $x \in X$ belongs to the fuzzy set $\tilde{A}$.

\section{Proposed Approach}
\label{sec:proposed_approach}
In aerial target classification and intent prediction, a key requirement is to classify the target and infer its intent with a low response time, providing timely and actionable information to the other subsystems. Needless to say, minimising false predictions is essential to reduce operational risk and enhance decision-making accuracy. This section outlines the proposed methodology for aerial target classification and intent prediction.
The methodology described here addresses aerial target classification and intent prediction as a time-series classification problem. Conventional time-series classification methods outlined in Section \ref{sec:related}, typically require analyzing longer temporal windows to make accurate predictions. However, this increases response times, delaying critical decisions, which is something undesirable in aerial target classification and intent prediction.

The proposed approach leverages the idea of early classification. Instead of analyzing the complete time series, predictions are made on shorter subsamples. These early predictions by themselves could be less accurate due to lack of sufficient data, therefore, the predicted beliefs from the subsamples are propagated over time using an evidential reasoning framework. Thus, the proposed approach enables faster predictions supporting rapid responses, improved accuracy and confidence in the prediction over time as the belief is continually updated with subsequent evidences. The proposed methodology balances the trade-off between rapid response and prediction accuracy, making it well-suited for dynamic environments demanding near-real-time aerial target classification and intent prediction. The proposed approach consists of two main modules: 
\begin{itemize}
    \item[1] Target type classification.
    \item[2] Target intent prediction.
\end{itemize}
 The schematic of the proposed approach, including the two modules, is shown in Fig.~\ref{fig:flow-diagram}.
\subsection{Target Type Classification} 
The target type classification module classifies the aerial targets into one of the eight predefined class of aerial targets, including multi-role fighter, small unmanned aircraft, bomber, etc. using kinematic input features such as altitude, velocity, acceleration, turn rate, etc.
\begin{figure}[htbp]
    \centering
    \includegraphics[width=0.8\textwidth]{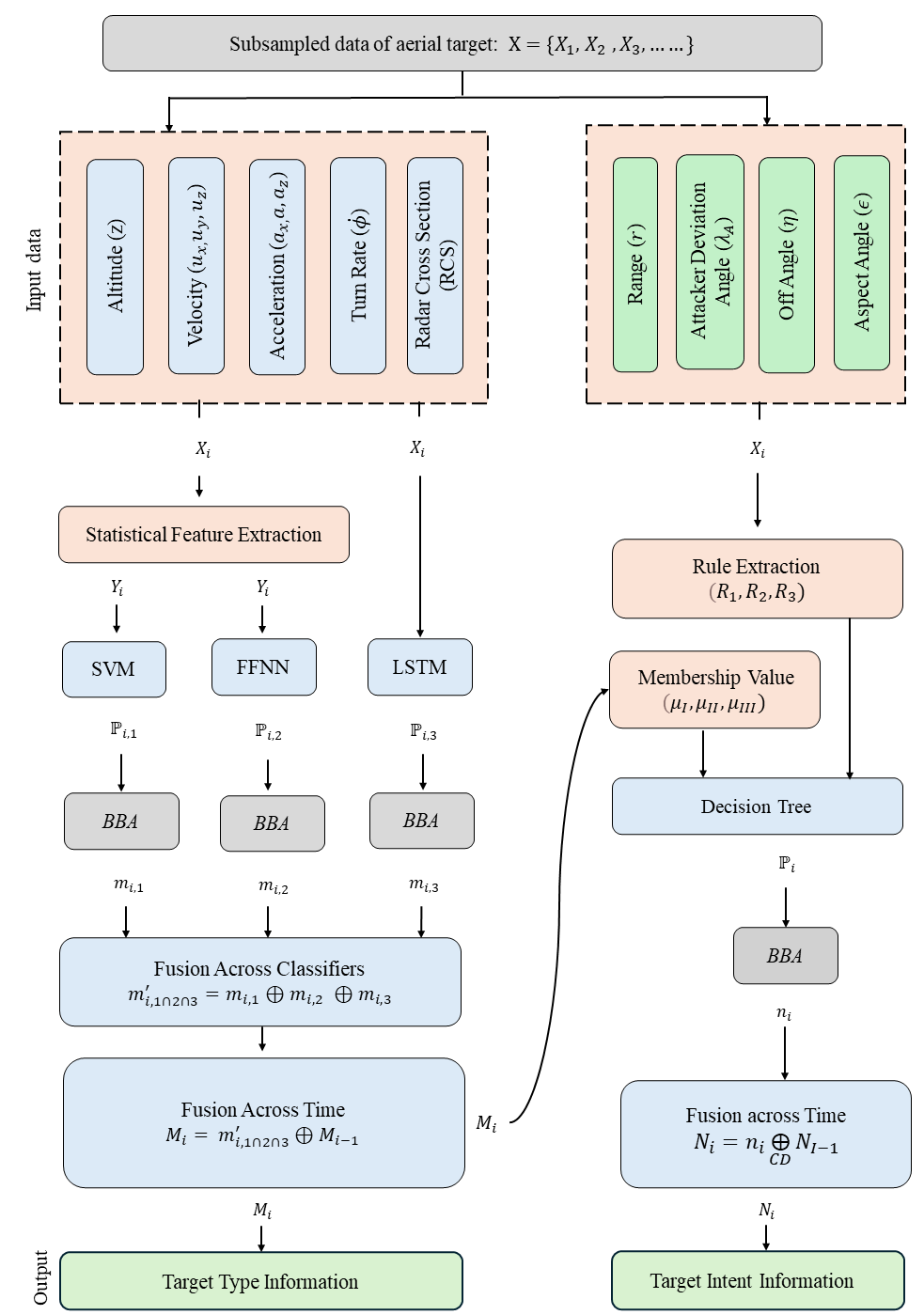}
    \caption{The schematic of the proposed integrated target classification approach.}
    \label{fig:flow-diagram}
\end{figure} 
The input to the module is a time-series subsample of window size $w$, i.e., the subsample has $w$ real-valued observations. Each observation consists of $k$ distinct measurements (features) obtained from sensors such as accelerometers, gyroscopes, radar, etc. Formally, a time-series $X = \{X_1, X_2, X_3,\ldots\}$ is represented as an ordered collection of subsamples. Each subsample, $X_i$, consists of $w$ ordered real-valued observations, i.e., $X_{i} = [x_{i, 1}, x_{i, 2}, \ldots, x_{i, w}]$. We assume that the measurements from various sensors are fused, and the time-synchronized measurements are available to the target type classification module as a discrete-time signal. For instance, $x_{i,j}$ denotes the measurements (features) corresponding to the time instant $t_{i,j}$, where $t_{i,j} = (i-1)w\Delta t + (j-1) \Delta t$ for $i=1,2,\ldots$ and $j=1,2,\ldots, w$, and $\Delta t$ is the sampling time. At each time step, there are $k$ distinct features, i.e., $x_{i,j} \in \mathbb{R}^k$, where $i = 1,2,\ldots$ and $j = 1,2,\ldots,w$. Thus, a subsample $X_i$ is a matrix of size $k \times w$. It is noted that the target type classification module utilizes only a subset of the $k$ features, specifically, the first $h$ features, where $h < k$. The rest of the $(k-h)$ features are employed in the target intent prediction module, which will be described in the next section. For each instance or subsample $X_i$, the target type classification module predicts a target class label $c_i$, where $c_i \in \Omega$ and $\Omega$ is the set of target type class labels.

The target type classification module, depicted in the left half of Fig.~\ref{fig:flow-diagram}, takes the subsample $X_i$ as input. The subsample includes time-synchronized features such as altitude $(z)$, velocity $(v_x,v_y,v_z)$, acceleration $(a_x,a_y,a_z)$, turn rate $(\dot\phi)$ and RCS of the target. The Statistical Feature Extraction block transforms each subsample into a compact representation using statistical descriptors (e.g., mean, standard deviation, skewness). This step reduces the dimensionality of the subsample while preserving key temporal patterns, thereby reducing the computational effort involved in learning and inference. The subsample or the extracted features are fed into an ensemble classifier consisting of a bank of classifiers. While the proposed approach is general and can be applied to different types of classifiers, in this work, we utilize three classifiers: Support Vector Machine (SVM), Feedforward Neural Network (FFNN), and Long Short-Term Memory (LSTM). The SVM and FFNN work on the reduced statistical features, capturing discriminative patterns, while the LSTM directly processes the subsample to capture temporal dependencies within the subsample. Each classifier provides a probability distribution over the set of target classes. These probability outputs are subsequently converted into BBA using a mapping that explicitly incorporates uncertainty in prediction. The Fusion Across Classifier block combines BBAs from multiple classifiers to provide a fused BBA for the current subsample. The resulting BBA is then fused with the BBA corresponding to the previous subsample in the Fusion Across Time block to result in the final BBA. It is emphasized that the final BBA resulting from the Fusion Across Time block not only uses information from the current subsample for prediction, but all the previous subsamples through the combination of the BBAs across time. In the subsequent paragraphs, each of the blocks depicted in Fig.~\ref{fig:flow-diagram} is explained in detail, starting with the Statistical Feature Extraction block. 

\subsubsection{Statistical Feature Extraction} \label{ss:class_features}

For computational efficiency, in some classifiers such as the FFNN and SVM, instead of directly using the raw features or subsample, we make use of statistical features such as mean, variance, skewness, kurtosis, etc. that are derived from each subsample. This reduction in dimensionality results in a matrix representation of size $h \times p$ for each subsample, where $p$ represents the number of statistical features and $p<w$. To extract compact and discriminative features, we choose 11 statistical descriptors independently for each feature dimension over the $w$ time steps. Given the subsample $X_i$, \(x_{i,j}^{(f)} \) denotes the feature \( f \in \{1, 2, \ldots, h\} \) at measurement $j$ in subsample $i$. Given the measurements \(x_{i,j}^{(f)} \) for $j=1,\ldots,w$, let \(\bar{x}_{i,j}^{(f)} \) for $j=1,\ldots,w$ denote the ordered set of measurements that are ordered from lowest to highest. The chosen statistical features along with their mathematical description are presented in Table \ref{tab:statistical_features}. In the sequel, $\lfloor m \rfloor$ denotes the integer part of $m$. The output of the Statistical Feature Extraction block is denoted by $Y_i$, where $Y_i$ is a matrix of size $h \times p$. 

\renewcommand{\arraystretch}{1.4}
\begin{table}[h]
    \centering
    \caption{Statistical Features Used in Target Type Classification}
    \begin{tabular}{ll}
        \hline \hline
        Feature Name & Mathematical Description \\
        \hline
        Maximum                       & $\bar{x}_{i,w}^{(f)}$ \\
        Minimum                       & $\bar{x}_{i,1}^{(f)}$ \\
        \multirow{2}{*}{Median}                  & $\bar{x}_{i,\frac{(w+1)}{2}}^{(f)}$, if $w$ is odd \\ & $\frac{1}{2}\big(\bar{x}_{i,\frac{w}{2}}^{(f)} + \bar{x}_{i,\frac{w}{2}+1}^{(f)}\big)$, if $w$ is even \\
        Mean ($\mu_i^{(f)}$) & $\frac{1}{w} \sum_{j=1}^{w} \bar{x}_{i,j}^{(f)}$\\
        Standard deviation ($\sigma_i^{(f)}$) & $\sqrt{ \frac{1}{w} \sum_{j=1}^{w} \left( x_{i,j}^{(f)} - \mu_i^{(f)} \right)^2 }$\\
        Range & $\bar{x}_{i,w}^{(f)} - \bar{x}_{i,1}^{(f)}$\\
        Skewness & $\frac{1}{w} \sum_{j=1}^{w} \left( \frac{x_{i,j}^{(f)} - \mu_i^{(f)}}{\sigma_i^{(f)}} \right)^3$\\
        Kurtosis & $\frac{1}{w} \sum_{j=1}^{w} \left( \frac{x_{i,j}^{(f)} - \mu_i^{(f)}}{\sigma_i^{(f)}} \right)^4$\\
        First quartile ($Q_1$) & $(1-\alpha)\,\bar{x}_{i,m}^{(f)} + \alpha \,\bar{x}_{i,m+1}^{(f)}$, where $m = \lfloor \frac{w+1}{4} \rfloor$ and $\alpha = \frac{w+1}{4} - \lfloor \frac{w+1}{4} \rfloor$\\
        Third quartile ($Q_3$) & $(1-\alpha)\,\bar{x}_{i,m}^{(f)} + \alpha \,\bar{x}_{i,m+1}^{(f)}$, where $m = \lfloor \frac{3(w+1)}{4} \rfloor$ and $\alpha = \frac{3(w+1)}{4} - \lfloor \frac{3(w+1)}{4} \rfloor$\\
        Inter-quartile range (IQR) & $Q_3 - Q_1$ \\          
        \hline \hline
    \end{tabular}
    \label{tab:statistical_features}
\end{table}

\subsubsection{Ensemble Classification and Belief Mapping}
\label{ss:class_mapping}
Once the statistical features $Y_i$ are extracted, $X_i$ and $Y_i$ are fed to the bank of classifiers. Classifiers such as SVM and FFNN take the extracted statistical features $Y_i$, while LSTM models directly process the subsample $X_i$ to capture temporal dependencies. The classification process is a mapping that takes either $X_i$ or $Y_i$ as input and gives the probability distribution over the set of target type class labels as output. Suppose there are $q$ target types \(c_1,c_2,\ldots,c_q\) to be classified, the frame of discernment $\Omega$ can be constructed as \(\Omega = \{c_1,c_2,\ldots,c_q\)\}. The probability distribution from classifier $\mathbb{C}_j$ for subsample $X_i$ is given as \(\mathbb{P}_{i,j}(\Omega) = \{P_1^{i,j},P_2^{i,j},\ldots,P_q^{i,j}\}\). These probability distributions $\mathbb{P}_{i,j}$ must be effectively combined to establish a comprehensive prediction. To combine the evidences ($\mathbb{P}_{i,j}$) from the classifiers, we employ Dempster-Shafer theory of evidential reasoning. Unlike the Bayesian approach \cite{verbert2017bayesian} that requires prior probabilities and can only assign belief to singleton hypotheses, DS theory allows for modeling incomplete, uncertain, or ambiguous information through Basic Belief Assignment as discussed in Section~\ref{prelim_DStheory}. This enables combining beliefs over both singleton and composite class subsets.

To leverage the benefits of evidential reasoning provided by DS theory, we first need to convert the probability distribution $\mathbb{P}_{i,j}$ from each classifier into a belief distribution $\overline{m}_{i,j}$. To that end, we introduce an additional variable $\mathit{err\_rate}_j$, which represents the complement of the accuracy of classifier $\mathbb{C}_j$, i.e., 
\begin{equation}
\mathit{err\_rate}_j = 1 - {accuracy}_{\mathbb{C}_j}.
\label{eqn:error_rate}
\end{equation}
The error rate indicates uncertainty or in other words, belief over the complete frame of discernment. The map from $\mathbb{P}_{i,j}$ to $\overline{m}_{i,j}$ accounting for the error rate is defined as given below:
\begin{equation}
\overline{m}_{i,j}(c)=
\begin{cases}
    0, \ \text{if} \ c = \emptyset \\
    (1- \mathit{err\_rate}_j) \times P_1^{i,j},\ \text{if} \ c = c_1\\
    (1- \mathit{err\_rate}_j) \times P_2^{i,j},\ \text{if} \ c = c_2 \\
    \vdots \\
    (1- \mathit{err\_rate}_j) \times P_q^{i,j},\ \text{if} \ c = c_q \\
    \mathit{err\_rate}_j,\ \text{if} \ c = \Omega
\end{cases}.
\label{eqn:bba}
\end{equation}
\noindent Let $\overline{\mathcal{F}} = \{c_1, c_2, \ldots, c_q, \Omega, \emptyset\}$ denote the domain of $\overline{m}_{i,j}$, where $\overline{m}_{i,j}(\Omega)$ represents the uncertainty in prediction of classifier $\mathbb{C}_j$. Certain groups of target classes often exhibit highly similar characteristics, resulting in considerable classification ambiguity due to overlapping features, as illustrated in Table~\ref{tab:targetTypes}. For instance, \texttt{Bomber} and \texttt{Civilian commercial airliner} possess comparable kinematic and dynamic characteristics, with even their RCS values falling within similar ranges. Similarly, \texttt{Military helicopter} and \texttt{Civilian helicopter} can be challenging to distinguish, as they share closely related features that make it difficult for the classifiers to discriminate between them. In combat scenarios, providing an uncertain prediction is often more beneficial to the pilot than an incorrect one. For instance, if the actual target is a \texttt{Bomber} but the classifier erroneously identifies it as a \texttt{Civilian commercial airliner}—owing to the challenge of distinguishing between these similar classes—this misclassification could have serious consequences. In safety-critical applications such as the one addressed in this study, it is preferable for the target type prediction algorithm to communicate uncertainty, indicating that the target may be either a \texttt{Bomber} or a \texttt{Civilian commercial airliner}, until additional evidence is available to make a confident and accurate identification. This approach prioritizes operational safety by preventing potentially hazardous decisions based on unreliable predictions. 

Let \(\theta_x\ \subseteq 2^\Omega \ \text{for} \ x = 1,2,\ldots,l\) denote a predefined set of focal elements representing target types that share similar characteristics and thereby could potentially lead to erroneous predictions as discussed in the previous paragraph. In this work, we assume that a target type can be part of at most one focal set. The definition of $\theta_x$ would be problem-specific and can either be inferred apriori by an expert or during the training process. For example, \(\theta_l = \{c_2,c_3\}\) represents a two-element set consisting of targets $c_2$ and $c_3$ with similar characteristics and RCS signatures. An appealing feature of the evidential reasoning framework is that such focal sets can be explicitly included in the belief prediction as outlined below. We modify the BBA $\overline{m}_{i,j}$ presented in (\ref{eqn:bba}) to account for the focal sets $\theta_x$ for $x = 1,2,\ldots,l$ to result in the belief distribution $m_{i,j}$ as shown below. With the $l$ focal sets $\theta_x$, the belief distribution $m_{i,j}$ will have $(q+l+2)$ elements, which includes $q$ singleton sets, $l$ focal sets, the empty set $\emptyset$, and the frame of discernment $\Omega$. The belief distribution $m_{i,j}$ is defined as 
\begin{equation}
m_{i,j}(c) = 
\begin{cases}
0, & \text{for } c = \emptyset \\
\overline{m}_{i,j}(c), & \text{for } c \notin \theta_x \ \text{and} \ c \neq \Omega  \text{ for } x = 1, 2, \ldots, l \\
\alpha_{i,j}^x \times \overline{m}_{i,j}(c), & \text{for } c \in \theta_x \text{ for } x = 1, 2, \ldots, l \\
(1 - \alpha_{i,j}^x) \times \sum_{c_k \in \theta_x} \overline{m}_{i,j}(c_k), & \text{for } c = \theta_x \text{ for } x = 1, 2, \ldots, l \\
\overline{m}_{i,j}(c), & \text{for } c = \Omega 
\end{cases}.
\label{eqn:bba_updated}
\end{equation}
In the preceding expression, $\alpha_{i,j}^x \in [0,1] $ is a weighting term given by 
\begin{equation}
\alpha_{i,j}^x=\frac{\max _{c_p \in \theta_x, c_q \in \theta_x, c_p \neq c_q}\left|\overline{m}_{i,j}\left(c_p\right)-\overline{m}_{i,j}\left(c_q\right)\right|}{\sum_{c_k \in \theta_x} \overline{m}_{i,j}\left(c_k\right)}\,.
\end{equation}
Let $\mathcal{F} = \{c_1, c_2, \ldots, c_q, \theta_1, \theta_2, \ldots, \theta_l, \Omega, \emptyset\}$ denote the domain of $m_{i,j}$. In the following lemma, we show that the updated belief distribution \( m_{i,j} \) defined in \eqref{eqn:bba_updated} results in a valid BBA that satisfies the non-negativity and normalization properties as outlined in \eqref{eqn:bba_prop}. 

\begin{lemma}
The belief distribution \( m_{i,j} \) defined in \eqref{eqn:bba_updated} results in a BBA that satisfies the following properties: 
\begin{equation*}
m_{i,j}(A) \in [0,1] \ \forall A \in \mathcal{F}, \ m_{i,j}(\emptyset)=0, \ \textrm{and} \ 
\sum_{A \in \mathcal{F}} m_{i,j}(A)=1.
\end{equation*} 
\end{lemma}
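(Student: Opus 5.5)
The plan is to verify the three properties in turn, first for the intermediate assignment $\overline{m}_{i,j}$ of \eqref{eqn:bba} and then for $m_{i,j}$ of \eqref{eqn:bba_updated}.

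\emph{Step 1: $\overline{m}_{i,j}$ is a valid BBA on $\overline{\mathcal{F}}$.} Since $\mathbb{P}_{i,j}$ is a probability distribution, we have $P_k^{i,j}\in[0,1]$ and $\sum_k P_k^{i,j}=1$. Since $\mathit{err\_rate}_j = 1-\mathrm{accuracy}_{\mathbb{C}_j}\in[0,1]$, each value $(1-\mathit{err\_rate}_j)P_k^{i,j}$ lies in $[0,1]$. The total mass is $(1-\mathit{err\_rate}_j)\sum_k P_k^{i,j}+\mathit{err\_rate}_j=1$.

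\emph{Step 2: non-negativity and $m_{i,j}(\emptyset)=0$.} The value $m_{i,j}(\emptyset)=0$ holds by definition. For the weight $\alpha_{i,j}^x$, write $S_x=\sum_{c_k\in\theta_x}\overline{m}_{i,j}(c_k)$. For any $c_p,c_q\in\theta_x$, non-negativity gives
\[
\left|\overline{m}_{i,j}(c_p)-\overline{m}_{i,j}(c_q)\right|\le \max\{\overline{m}_{i,j}(c_p),\overline{m}_{i,j}(c_q)\}\le S_x .
\]
Hence $\alpha_{i,j}^x\in[0,1]$ whenever $S_x>0$. The degenerate case $S_x=0$ is the main obstacle, because $\alpha_{i,j}^x$ is then $0/0$. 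I would adopt the convention that $\alpha_{i,j}^x$ takes any value in $[0,1]$, for example $1$. This is harmless because every term involving $\alpha_{i,j}^x$ is then multiplied by a zero mass. With $\alpha_{i,j}^x\in[0,1]$, each entry of $m_{i,j}$ is a product of numbers in $[0,1]$, or $(1-\alpha_{i,j}^x)S_x$, so all entries are non-negative. The upper bound $m_{i,j}(A)\le 1$ will follow from the normalization in Step 3 together with non-negativity.

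\emph{Step 3: normalization.} The assumption that each target type lies in at most one $\theta_x$ lets the singletons be partitioned into the members of $\theta_1,\dots,\theta_l$ and the remaining singletons. The sum over $\mathcal{F}$ then splits into the unaffected singletons, $\Omega$, and for each $x$ the block consisting of the members of $\theta_x$ together with the set $\theta_x$ itself. Each such block contributes
\[
\alpha_{i,j}^x S_x+(1-\alpha_{i,j}^x)S_x=S_x,
\]
which is exactly the mass $\overline{m}_{i,j}$ gave to those singletons. Therefore $\sum_{A\in\mathcal{F}}m_{i,j}(A)=\sum_{A\in\overline{\mathcal{F}}}\overline{m}_{i,j}(A)=1$ by Step 1. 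Combined with non-negativity, this yields $m_{i,j}(A)\in[0,1]$ for all $A\in\mathcal{F}$.
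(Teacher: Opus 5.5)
Your proof is correct and follows the same route as the paper: it checks the masses entry by entry, then sums block-wise, with each focal set $\theta_x$ and its members contributing $\alpha_{i,j}^x S_x + (1-\alpha_{i,j}^x)S_x = S_x$ and so recovering the total mass of $\overline{m}_{i,j}$. Your version is slightly more careful than the paper's: you prove $\alpha_{i,j}^x \in [0,1]$ (the paper only asserts it), you handle the $0/0$ case $S_x = 0$, and you state explicitly the disjointness of the focal sets that the paper's partition into $\mathcal{S}_1$ and $\mathcal{S}_2$ relies on without saying so.
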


\begin{proof}
We begin by showing that $m_{i,j}(\emptyset)=0$ and $m_{i,j}(A) \in [0,1] \ \text{for all} \ A \in \mathcal{F}$. We observe that $m_{i,j}(A) \in [0,1]$ for $A = \emptyset$ and $ A = \Omega$ since $m_{i,j}(\emptyset) = 0$ and $m_{i,j}(\Omega) = \overline{m}_{i,j}(\Omega) = err\_rate_j$, where $err\_rate_j \in [0,1]$. Now, let us consider three cases, namely $A \notin \theta_x$, $A \in \theta_x$, and $A = \theta_x$,  for $x=1,2,\ldots,l$. Since $m_{i,j}(A) = \overline{m}_{i,j}(A)$ for $A \notin \theta_x$, from \eqref{eqn:bba}, we have $m_{i,j}(A) \in [0,1]$ for $A \notin \theta_x$. Recognizing that $0 \leq \alpha_{i,j}^x \leq 1$, we can conclude from \eqref{eqn:bba} and \eqref{eqn:bba_updated} that $m_{i,j}(A) \in [0,1]$ for $A \in \theta_x$, where $x=1,2,\ldots,l$. For $A=\theta_x$, i.e., $A$ is one of the focal sets, from \eqref{eqn:bba_updated}, we see that 
\begin{equation}
m_{i,j}(A) = \sum_{c_k \in \theta_x} \overline{m}_{i,j}\left(c_k\right) - \max _{c_p \in \theta_x, c_q \in \theta_x, c_p \neq c_q}\left|\overline{m}_{i,j}\left(c_p\right)-\overline{m}_{i,j}\left(c_q\right)\right|, \ \text{for} \ A = \theta_x \ \text{and} \ x= 1,2,\ldots,l. 
\end{equation}
Since $\sum_{c_k \in \theta_x} \overline{m}_{i,j}\left(c_k\right) \le \sum_{c_k \in \overline{\mathcal{F}}} \overline{m}_{i,j}\left(c_k\right)$, where $x = 1,2,\ldots,l$, and $\sum_{c_k \in \overline{\mathcal{F}}} \overline{m}_{i,j}\left(c_k\right) = 1$, we can conclude that $m_{i,j}(A) \in [0,1]$ for $A = \theta_x$.

Now, we will show the normalization property $\sum_{A \in \mathcal{F}} m_{i,j}(A)=1$. Without loss of generality, let us assume that the set $\{c_1,c_2,\ldots,c_q\}$ is partitioned as $\mathcal{S}_1 = \{c_1,c_2,\ldots,c_m\}$ and $\mathcal{S}_2 = \{c_{m+1},\ldots,c_q\}$, where $1 \le m < q$. The set $\mathcal{S}_1$ is such that if $A \in \mathcal{S}_1$, then $A \notin \theta_x$ for $x=1,2,\ldots,l$. For the set $\mathcal{S}_2$, if $A \in \mathcal{S}_2$, then $A \in \theta_x$ for $x=1,2,\ldots,l$. Recall that $\mathcal{F}$ has $(q+l+2)$ elements, summing the belief values for all $A \in \overline{\mathcal{F}}$, we get $\sum_{A \in \mathcal{S}_1} m_{i,j}(A) + \sum_{A \in \mathcal{S}_2} m_{i,j}(A) + m_{i,j}(\emptyset) + m_{i,j}(\Omega)$. Using \eqref{eqn:bba_updated}, this sum can be written as 
\begin{equation}
\sum_{A \in \overline{\mathcal{F}}} m_{i,j}(A) = \sum_{c_k \notin \theta_x} \overline{m}_{i,j}(c_k)
+ \alpha_{i,j}^1 \sum_{c_k \in \theta_1} \overline{m}_{i,j}(c_k)
+ \alpha_{i,j}^2 \sum_{c_k \in \theta_2} \overline{m}_{i,j}(c_k)
+ \ldots
+ \alpha_{i,j}^l \sum_{c_k \in \theta_l} \overline{m}_{i,j}(c_k) + \overline{m}_{i,j}(\Omega). \label{eqn:sum_fbar}
\end{equation}
Using \eqref{eqn:bba_updated} and summing up the belief contributions from the $l$ focal sets, we have 
\begin{equation}
\sum_{A \in \mathcal{F} \backslash \overline{\mathcal{F}}} m_{i,j}(A) = (1-\alpha_{i,j}^1) \sum_{c_k \in \theta_1} \overline{m}_{i,j}(c_k)
+ (1-\alpha_{i,j}^2) \sum_{c_k \in \theta_2} \overline{m}_{i,j}(c_k)
+ \ldots
+ (1-\alpha_{i,j}^l) \sum_{c_k \in \theta_l} \overline{m}_{i,j}(c_k). \label{eqn:sum_focal}
\end{equation}
From \eqref{eqn:sum_fbar} and \eqref{eqn:sum_focal}, we get
\begin{equation}
\sum_{A \in \mathcal{F}} m_{i,j}(A) = \sum_{c_k \notin \theta_x} \overline{m}_{i,j}(c_k)
+ \sum_{c_k \in \theta_1} \overline{m}_{i,j}(c_k)
+ \sum_{c_k \in \theta_2} \overline{m}_{i,j}(c_k)
+ \ldots
+ \sum_{c_k \in \theta_l} \overline{m}_{i,j}(c_k) + \overline{m}_{i,j}(\Omega).
\end{equation}
Recognizing that the right hand side of the preceding equation can be reduced to $\sum_{c_k \in \overline{\mathcal{F}}} \overline{m}_{i,j}\left(c_k\right)$, which equals $1$, we have $\sum_{A \in \mathcal{F}} m_{i,j}(A) = 1$, thus proving the normalization property.
\end{proof}

\subsubsection{Belief Fusion Across Classifiers and Time}
In this block, the beliefs $m_{i,j}$ from the classifiers $\mathbb{C}_j$ are fused to result in a combined belief distribution. Since the backbone of the proposed approach is the DS theory of evidence, it is reasonable to adopt one of the many combination rules proposed in the literature. The Dempster rule of combination, one of the widely adopted combination rules, is found to result in counter-intuitive results when the evidences are highly conflicting \cite{zadeh1986simple}, which is likely to be the case for the aerial target identification and intent prediction problem. To overcome the issue of conflict, the Yager combination rule was proposed in \cite{yager1987dempster}, wherein the conflicting belief is moved to $\Omega$ as shown in \eqref{eqn:yager_combi}. However, Yager’s rule tends to allocate the conflict mass to the frame of discernment $\Omega$, which does not offer useful decision insight. To address this shortcoming, we propose a combination rule that is applicable to belief distributions with focal sets, whereby instead of assigning conflict to \(\Omega\), we redistribute it to specific focal sets \( \theta_1,\theta_2,\ldots,\theta_l\) based on the nature of conflict. The combined belief from the proposed rule provides conclusive and interpretable decisions, especially in scenarios where evidence clearly supports certain groups of targets (focal sets), but remains uncertain between them (elements of the focal sets). Given two belief distributions $m_{i,1}$ and $m_{i,2}$ that are defined over the domain $\mathcal{F}$, the proposed combination rule is defined as
\begin{equation}
\begin{aligned}
m_{i, 1 \cap 2}(A) &= \sum_{B \cap C = A} m_{i,1}(B) m_{i,2}(C), \ \forall A \neq \emptyset \ \text{and} \ A \neq \theta_x, \text{for} \ x = 1,2,\ldots,l  \\
m_{i, 1 \cap 2}(A) &= \sum_{B \cap C = A} m_{i,1}(B) m_{i,2}(C) + K_x, \ \text{for}\  A = \theta_x, \ \text{for} \ x = 1,2,\ldots,l, \ \text{where} \\ 
K_x & = \sum_{\substack{B \cap C = \emptyset \ B \in \theta_x, C \in \theta_x}} m_{i,1}(B) m_{i,2}(C).
\end{aligned}
\label{eqn:generalized_yager}
\end{equation}
Since the preceding belief distribution does not satisfy the property $\sum_{A \in \mathcal{F}} m_{i, 1 \cap 2}(A)=1$, the combined belief is normalized as follows:  
\begin{equation}
m_{i, 1 \cap 2}'(A) = \frac{m_{i, 1 \cap 2}(A)}{\sum_{A \in \mathcal{F}} m_{i, 1 \cap 2}(A)}\ . \label{eqn:comb_norm}
\end{equation}
When no conflict exists within a focal set \(\theta_x\), the corresponding conflict \(K_x\) becomes zero, and the generalized combination rule reduces to the standard Dempster-Shafer rule. In contrast, when significant conflict arises among elements of a predefined focal group, the proposed modification redistributes the conflicting mass within that group. This targeted redistribution preserves class specificity and avoids dilution of belief into the global uncertainty set \(\Omega\). Following this fusion step, the resulting belief distributions corresponding to the $i^{th}$ subsample $X_i$ are combined across the three classifiers to obtain the distribution $m_{i, 1 \cap 2 \cap 3}'$, where \begin{equation}
m_{i, 1 \cap 2 \cap 3}' = m_{i,1} \oplus m_{i,2} \oplus m_{i,3}\ . \label{eq:comb_class}
\end{equation}
In the preceding expression, $\oplus$ denotes the combination operation defined in \eqref{eqn:generalized_yager}. The distribution $m_{i, 1 \cap 2 \cap 3}'$ is then combined with the belief distribution from the $(i-1)^{th}$ subsample to result in the distribution $M_i$, which is the output of the target type classification module for the subsample $X_i$. Specifically, $M_i$ is given by 
\begin{equation}
M_i = m_{i, 1 \cap 2 \cap 3}' \oplus M_{i-1} \label{eq:comb_time}
\end{equation}
where $M_{i-1}$ is the output of the target type classification module for subsample $X_{i-1}$ and is defined over the domain~$\mathcal{F}$.

\subsection{Target Intent Prediction} 
The objective of the target intent prediction module is to infer the intent of an approaching target with respect to the ownship.  In this work, target intent is categorized into three classes: hostile, non-hostile, and suspicious. In addition to the output from the target type classification module, this module incorporates combat geometric features derived from both the target and the ownship. As shown in the right half of Fig.~\ref{fig:flow-diagram}, the module processes the same set of subsamples $X_i$, but focuses on the last $(k-h)$ features in each $x_{i,j}$. Specifically, it uses four relative geometric features: range, attacker deviation angle, off angle, and aspect angle. These features are formally defined in the sequel. A decision tree classifier forms the core of the module. It utilizes handcrafted rules and fuzzy membership degrees to produce a probability distribution across the three intent classes. These probability distributions are then converted into belief distributions using a mapping similar to the one described earlier. Finally, the module applies a belief combination rule--akin to the one employed in the target type classification module--to fuse beliefs across the subsamples. 

In combat scenarios, the true intent of a target may change over time; consequently, the belief combination rule proposed in the previous section will lead to incorrect conclusions due to the combination of conflicting temporal evidence. To overcome this problem, we propose a method to quantify the differences between the evidence using a distance metric introduced in \cite{jousselme2001new}. When the calculated distance remains within a predefined range, belief combination proceeds, thereby refining updates and increasing confidence in intent prediction. If the distance exceeds the threshold, the belief combination is terminated and a new evidence set is initiated to avoid misclassification.

\subsubsection{Features Derived From Combat Geometry}
\label{subsec:combat-geo}
In an aerial combat engagement, the target will strive to position itself in such a manner to gain air superiority. We identify four features that characterize the relative position and orientation of the target with respect to the ownship in an aerial combat scenario \cite{shin2018autonomous}. The aerial combat geometry involving the ownship and a target is depicted in Fig. \ref{fig:combat-geometry}. The four combat geometry features are range, attacker deviation angle, off angle, and aspect angle. 

\begin{figure}[h] 
    \centering
        \includegraphics[width=0.7\linewidth]{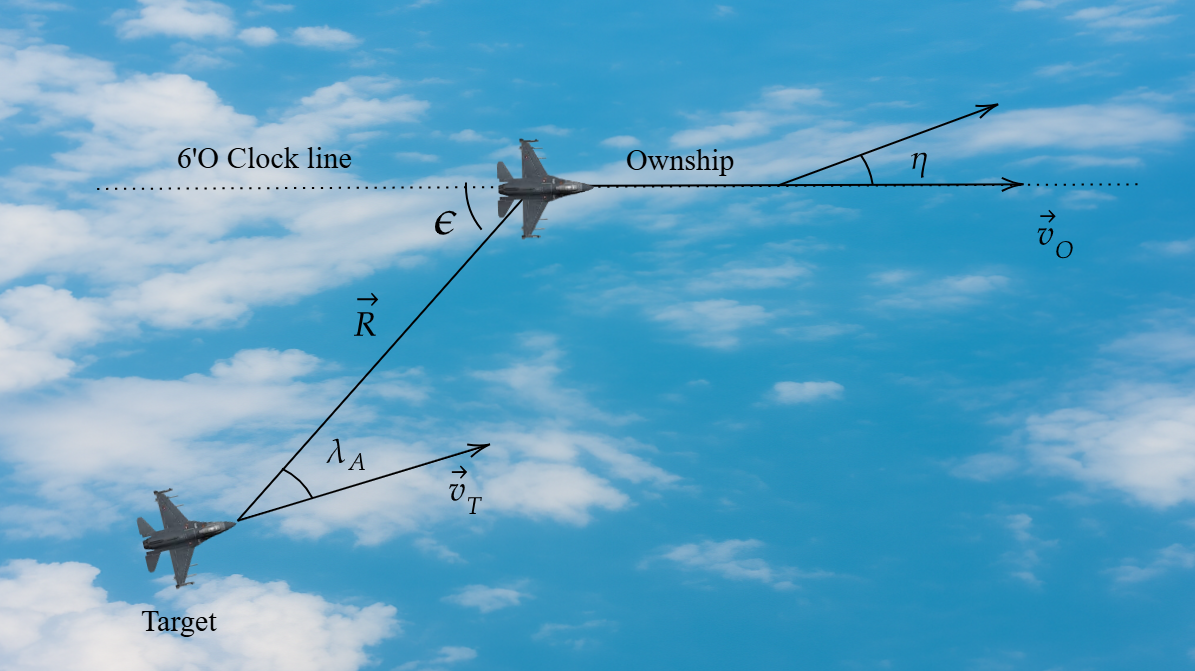}
        \caption{Combat geometry involving the ownship and a target.}
        \label{fig:combat-geometry}
\end{figure}

Range ($r$) represents the distance from the target to the ownship. Aspect angle ($\epsilon$) is the angle between the ownship's 6'O clock line and the line-of-sight (LOS) vector. It provides information about whether the target is ahead or behind the ownship. If $\epsilon<90 \,\textrm{deg}$, the attacker is behind the ownship. If $\epsilon>90\,\textrm{deg}$, the attacker is ahead of the ownship. The attacker deviation angle ($\lambda_A$) is the angle between the target's velocity vector and the LOS vector. It provides information on whether the target's nose is pointed towards the ownship. Off angle ($\eta$) is the angle between the velocity vectors of the target and the ownship. Let $\vec{R}$ represent the position vector of the ownship relative to the target, and $\vec{v}_{O}$ and $\vec{v}_{T}$ denote the velocity vectors of the ownship and target, respectively. While $\vec{v}_{O}$ is obtained from  navigation sensors onboard the ownship, $\vec{R}$ and $\vec{v}_{T}$ are typically estimated by a target tracking module that processes data from radar mounted on the ownship. The mathematical expressions for these four features are provided below: 
\begin{align}
 r &= |\vec{R}|, \quad
\epsilon = \cos^{-1} \left ( \frac{\vec{v}_O \cdot \vec{R}}{|\vec{v}_O| |\vec{R}|} \right ), \quad
\lambda_A = \cos^{-1} \left ( \frac{\vec{v}_T \cdot \vec{R}}{|\vec{v}_T| |\vec{R}|} \right ), \ \text{and} \quad
 \eta = \cos^{-1} \left ( \frac{\vec{v}_T \cdot \vec{v}_O}{|\vec{v}_T| |\vec{v}_O|} \right ) . \label{eqn:combat_features}
\end{align}

\subsubsection{Membership Degree Based on Target Type} \label{ss:membership_degree}
The belief distribution generated by the target-type classification module offers valuable insight into the potential intent of the target. For example, a fighter aircraft is generally more likely to be hostile than a civilian commercial airliner. To quantify the threat level of the target, we aggregate the belief distributions from the target-type classification module over semantically grouped target classes. Inspired from the conversion method proposed in \cite{zhang2022information}, wherein probability distributions are used to derive fuzzy membership degrees for threat classification, herein, we employ belief distributions from the target-type classification module to compute the membership degrees. We categorize the targets into three threat levels: low, medium, and high. A class (or set of classes) in $\mathcal{F}$ is mapped to one of three threat levels by aggregating the belief mass assigned to the relevant focal elements. Let \(\mathcal{C}_1,\mathcal{C}_2,\mathcal{C}_3 \in \mathcal{F}\) denote the predefined subsets of $\mathcal{F}$ corresponding to low, medium, and high threat levels, respectively. The membership values are computed as given below: 
\begin{equation}
\begin{aligned}
\mu_I(M_i) &= \sum_{A \in \mathcal{C}1} M_i(A) &  \quad \text{(Low threat)} \\
\mu_{II}(M_i) &= \sum_{A \in \mathcal{C}2} M_i(A) & \quad \text{(Medium threat)} \\
\mu_{III}(M_i) &= \sum_{A \in \mathcal{C}_3} M_i(A) & \quad \text{(High threat)}
\end{aligned}
\label{eqn:membership_value}
\end{equation}

 \subsubsection{Rules Derived From Combat Geometry Features} \label{ss:rules_combat}
 The decision rules for the decision tree classifier are outlined next. The basic fighter maneuver doctrine offers tactical guidelines pilots use in close-range air combat to achieve a positional advantage over an opponent \cite{datagen}. Based on this doctrine, we derive three rules using the four previously described combat geometry features: range, attacker deviation angle, off angle, and aspect angle. When these three rules are satisfied, they indicate that the target is executing an attacking maneuver relative to the ownship. The three rules are detailed below.
 
\begin{itemize}
\item[] $\mathcal{R}_1$: The distance between the target and the ownship is decreasing. Mathematically, $\frac{d{r}}{dt}<0.$
\item[] $\mathcal{R}_2$: The target is pointing towards the ownship, i.e., $\lambda_A < f(\epsilon)$.
\item[] $\mathcal{R}_3$: The target points toward the ownship when it is behind. Mathematically, $\eta < 90^\circ, \ \text{when} \ \epsilon < 90^\circ$. 
\end{itemize}
where $f(\epsilon) = \begin{cases}   \epsilon,  & \epsilon < 90^0 \\  180^0 - \epsilon, & \epsilon > 90^0 \end{cases}$.
With an objective to reduce the computational demand on the decision tree classifier, we define three additional variables, $R_i$, for $i=1,2,3$, to denote the fraction of time steps within a subsample during which the rule $\mathcal{R}_i$ is satisfied. The fraction $R_i$ is computed as
\begin{equation}
R_i = \frac{1}{w} \sum_{t=1}^{w} \mathbf{1}(\mathcal{R}_i(t))
\label{eqn:rule_extract}
\end{equation}
where \( \mathbf{1}(\cdot) \) is the indicator function that equals one if its argument is true, and zero otherwise. This transformation reduces the dimensionality of the training data from $n \times 3w$ to $n \times 3$. Combined with the three previously defined membership degrees, we now form a feature vector of size $6 \times 1$ consisting of $\mu_I$, $\mu_{II}$, $\mu_{III}$, and $R_i$ for $i=1,2,3$. This feature vector serves as input to the rule-based decision tree classifier. The output probability distribution from the classifier is converted to BBA using a mapping similar to \eqref{eqn:bba}. The resulting belief distribution that corresponds to the $i^{th}$ subsample $X_i$ is denoted as $n_{i}$, where $n_i$ is defined over the domain $\mathcal{G} = \{ \texttt{Hostile}, \texttt{Non-hostile}, \texttt{Suspicious},\texttt{$\Omega$},\ \texttt{$\varnothing$}\}$. 

\subsubsection{Distance-Based Combination Rule} \label{ss:intent_comb_rule}
In dynamic scenarios where target intent evolves rapidly over time, applying the DS combination rule or the Yager rule or the combination rule proposed in \eqref{eqn:generalized_yager} across successive subsamples can be problematic. The primary challenge arises from the fact that conflicting evidence may be amplified through repeated combinations, potentially resulting in unreliable conclusions. This highlights the need for alternative strategies to mitigate the cumulative effects of dynamic inconsistencies in rapidly evolving environments. Let us consider the belief distributions $n_{i}$ and $N_{i-1}$, where $N_{i-1}$ is the output of the target intent prediction module corresponding to subsample $X_{i-1}$. The degree of conflict between the two pieces of evidence can be quantified using the evidence distance \cite{jousselme2001new}, denoted as $d(n_i, N_{i-1})$ and defined in~\eqref{eqn:evidence_distance}. We propose a modification to the standard DS combination rule \eqref{eqn:dempster_combi} by integrating a distance-based stopping criterion. The underlying idea is that if $d(n_i, N_{i-1}) \geq x^*$, where $x^*$ is a predefined threshold, the high level of conflict indicates that the combination process should be restarted from that point onward rather than carried forward. The threshold $x^*$ is determined by solving an optimization problem, which aims to maximize the classification accuracy. 
The modified combination rule is formally defined as
\begin{equation}
n_i \underset{CD}{\oplus} N_{i-1} = \begin{cases} n_i \underset{DS}{\oplus} N_{i-1}, \quad \text{if} \ d(n_i, N_{i-1}) < x^*  \\ n_i,  \quad \text{if} \ d(n_i, N_{i-1}) \geq x^*  \end{cases}. \label{eqn:comb_rule_dist}
\end{equation}
The output of the target intent prediction module for the $i^{th}$ subsample $X_i$ is denoted as $N_i$, where $N_i = n_i \underset{CD}{\oplus} N_{i-1}$. 

Algorithm~\ref{alg:target_classification} provides the pseudocode for the aerial target classification and intent prediction algorithm.

\begin{algorithm}[ht]
\doublespacing
    \caption{Aerial Target Classification and Intent Prediction}\label{alg:target_classification}
    \begin{algorithmic}[1]
    \State \textbf{Input:} A multivariate time series \( X = \{X_1, X_2, X_3, \ldots \} \) consisting of subsamples \( X_i = [x_{i,1}, x_{i,2}, \dots, x_{i,w}] \).
    \State \textbf{Output:} Target type ($M_i$) and intent ($N_i$) belief distributions.
    \Procedure{Target type classification}{$X_i$, $M_{i-1}$}
        \State Extract statistical features $Y_i$ from $X_i$ (Section~\ref{ss:class_features}).
        \State Obtain probability distributions $\mathbb{P}_{i,j}(\cdot)$ for $j=1,2,3$ using trained FFNN, SVM, and LSTM classifiers.        
        \State Convert $\mathbb{P}_{i,j}(\cdot)$ to BBA $m_{i,j}$ using \eqref{eqn:bba_updated}.
        \State Fuse $m_{i,j}$ across classifiers using the combination rule in \eqref{eqn:generalized_yager} to obtain $m_{i, 1 \cap 2 \cap 3}'$ using \eqref{eqn:comb_norm} and \eqref{eq:comb_class}.
        \State Combine evidence across time by fusing $m_{i, 1 \cap 2 \cap 3}'$ with $M_{i-1}$ to get $M_i$ using \eqref{eq:comb_time}.
        \State \Return Target type belief distribution $M_i$.
            \EndProcedure
    \Procedure{Target intent prediction}{$X_i$, $M_i$, $N_{i-1}$}
    \State Extract features from combat geometry using \eqref{eqn:combat_features} (Section~\ref{subsec:combat-geo}).
    \State Convert target type beliefs $M_i$ to membership values $\mu_I$, $\mu_{II}$, $\mu_{III}$ using \eqref{eqn:membership_value} (Section~\ref{ss:membership_degree}).
        \State Derive rules from combat geometry features using \eqref{eqn:rule_extract} (Section~\ref{ss:rules_combat}).
        \State Form feature vector using rules and membership degrees; obtain belief distribution $n_i$ (Section~\ref{ss:rules_combat}).
        \State Fuse evidence across time by fusing $n_i$ and $N_{i-1}$ using the combination rule in \eqref{eqn:comb_rule_dist} (Section~\ref{ss:intent_comb_rule}).
        \State Return target intent belief distribution $N_i$.
    \EndProcedure
    \end{algorithmic}
\end{algorithm}

\section{Synthetic Dataset Generation}
\label{sec:data_generation}
A large and diverse training dataset is key to achieving good performance for the classifiers discussed in the previous section. However, in the context of aerial combat, access to such datasets is often limited and not publicly available due to security concerns. Moreover, collecting experimental data in this domain can be highly impractical. Consequently, there is a need for a simplified method to generate a synthetic dataset containing trajectories of various aerial vehicles operating in a combat environment.

In the domain of aerial target classification, existing studies typically rely on simulated datasets; however, to the best of our knowledge, none of these works \cite{zhang2022information, wang2023quick,zhang2024target,wang2024intelligent,singh2014dynamic} provide details on how the data was generated or the specific conditions under which the simulations were performed. The lack of publicly available datasets prevents reproducibility and poses challenges for further research progress in this domain. To overcome these limitations, we generate our own dataset, enabling precise control over the simulation process and ensuring that it is tailored to the operational scenarios relevant to this study. The dataset generation process has been detailed in our previous work \cite{datagen}.

In \cite{datagen}, we introduced an optimization-based technique that uses basic kinematic target motion models to produce diverse trajectories of various aerial targets operating in a combat environment. In this study, we employ that approach to generate a dataset of aerial target trajectories encompassing multiple target types, as listed in Table \ref{tab:targetTypes}. For completeness, a brief overview of the method from \cite{datagen} is provided before detailing the dataset used to train and evaluate the aerial target classification and intent prediction modules.

\subsection{Optimization-Based Approach For Dataset Generation}
Kinematic motion models are employed to generate the aerial target trajectories, with dynamic feasibility ensured by incorporating target-specific dynamic constraints. We consider eight distinct aerial target types: \texttt{Air-to-air missile}, \ \texttt{Bomber}, \ \texttt{Civilian commercial airliner}, \ \texttt{Civilian helicopter}, \texttt{Military helicopter}, \texttt{Multi-role} \ \texttt{fighter}, \  \texttt{Small civilian aircraft}, and \texttt{Small unmanned aircraft}. For each target type, operational limits on parameters such as altitude, velocity, climb rate, turn rate, and acceleration are defined based on typical mission profiles. These constraints are either provided by domain experts or derived from historical data for the respective target. The specific constraint values for the eight target types are provided in Table \ref{tab:targetTypes}. 

The objective is to generate a dataset of trajectories representing one-versus-one engagements between the ownship and a single target. In a combat scenario, a target may or may not engage in an “attack” on the ownship. Additional engagement-specific constraints, derived from aerial combat geometry analogous to those studied in~\cite{shin2018autonomous}, are formulated to synthesize attacking trajectories. The trajectory generation process is formulated as an optimization problem, wherein the goal is to determine kinematic model parameters and initial positions that minimize constraint violations while satisfying the target-specific constraints. To ensure diversity in the resulting trajectory database, a coverage metric based on cell occupancy is utilized to systematically explore the state space. Inspired by the coverage-guided test generation system proposed in \cite{dang2009coverage}, the metric directs the input signal generation process toward previously unexplored regions of the search space. In this context, the state space comprises velocity, climb rate, and acceleration.
\begin{figure}[h] 
    \begin{subfigure}{0.47\textwidth}
        \centering
        \includegraphics[width=\linewidth]{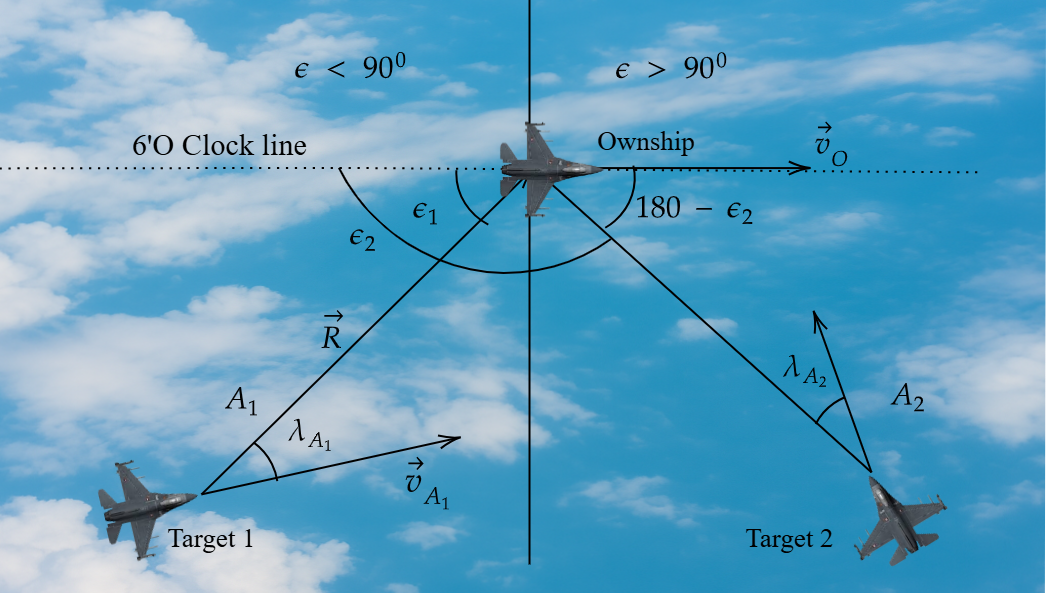}
        \caption{A representative attacking scenario.}
        \label{fig:lambda-attacking}
    \end{subfigure}
    \hfill 
    \begin{subfigure}{0.47\textwidth}
        \centering
        \includegraphics[width=\linewidth]{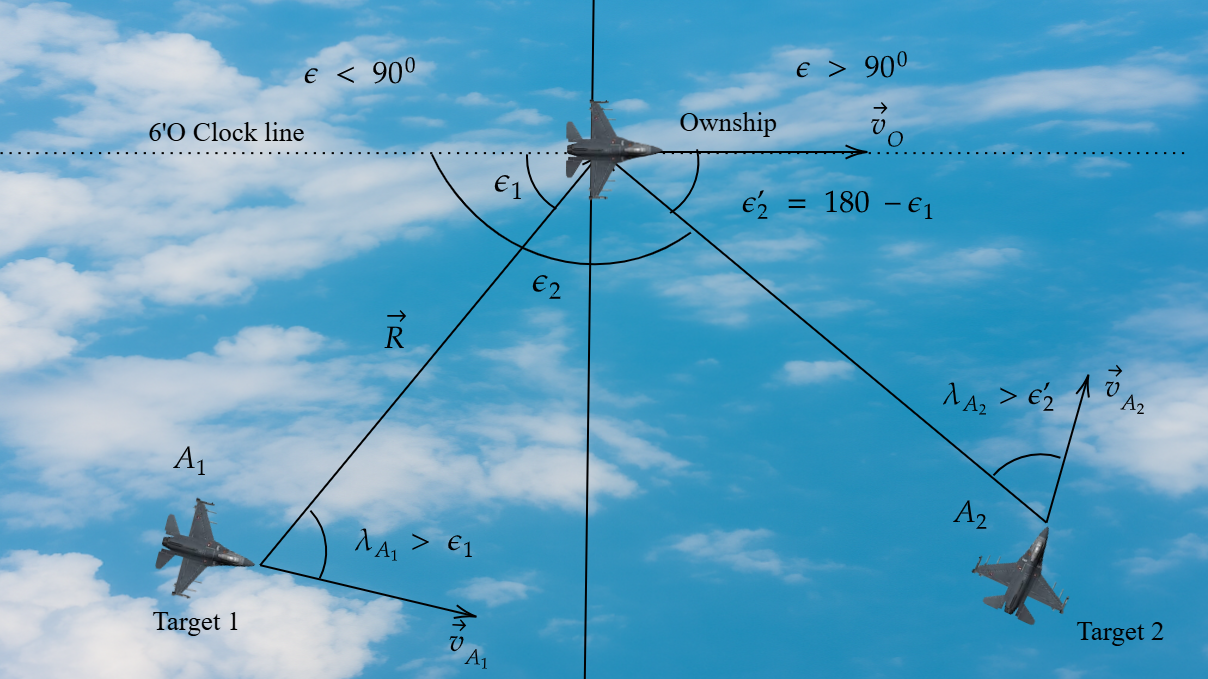}
        \caption{A representative non-attacking scenario.}
        \label{fig:lambda-nonattacking}
    \end{subfigure}
    \caption{Aerial combat scenarios considered in this work.}
    \label{fig:combat-scenarios} 
\end{figure}
In this work, four different motion models — Singer model, Curvilinear model, 3D constant turn model, and Generalized coordinated turn model — are used to generate diverse trajectories by suitably adjusting the model parameters. The aerial combat parameters used to formulate the combat-specific constraints are shown in Fig.~\ref{fig:combat-geometry}. A trajectory segment where the target is pointing towards and closing on the ownship is considered an \texttt{attacking} trajectory; otherwise, the trajectory segment is considered \texttt{non-attacking}. For further details on the specific parameters used in each model and the combat-specific constraints, the interested reader is referred to \cite{datagen}. 
The target trajectories are generated by enforcing target-specific constraints, denoted as $\mathbf{g}(\mathbf{x}(\cdot))$, where $\mathbf{x}(\cdot)$ represents the target state comprising position, velocity, acceleration, and turn rate. The constraints for the eight target types considered in this work are listed in Table~\ref{tab:targetTypes}. In addition to these constraints, the generated trajectories are designed to include both \texttt{attacking} and \texttt{non-attacking} segments. The attacker-specific constraints, expressed as $\mathbf{h}(\mathbf{x}(\cdot), \mathbf{x}_o(\cdot))$, are defined in terms of four combat geometry features—range, attacker deviation angle, off-angle, and aspect angle—where $\mathbf{x}_o(\cdot)$ denotes the ownship state. Both target-specific and attacker constraints are subject to upper and lower bounds, represented by $\mathbf{G}_{UB}, \mathbf{G}_{LB}$, $\mathbf{H}_{UB}, \ \text{and} \ \mathbf{H}_{LB}$, respectively. The primary objective of the proposed trajectory generation algorithm is to identify suitable target motion model parameters and initial target positions such that the resulting trajectory satisfies these constraints with minimal violations. The objective function to minimize the target-specific constraint violations is given by 
\begin{align}
\mathcal{J}_T = \sum_{i=1}^k \lambda_{i} \left(  \sum_{j=1}^N  \mathcal{F}(g_{i}(\mathbf{x}(j)) - G_{UB_{i}}) + \mathcal{F}(G_{LB_{i}} - g_{i}(\mathbf{x}(j)))\right)    
\label{eqn:obj-fn-1}
\end{align} \\
where $k$ is the number of target-specific constraints and $N$ is the maximum number of time samples in the trajectory. For an \texttt{attacking} trajectory segment, the following objective function is minimized within an attack window $w_a = n_b - n_a$: 
    \begin{align}
    \mathcal{J}_A = \sum_{i=1}^m \lambda_{i}' \left(\sum_{j={n_a}}^{n_b} \mathcal{F}(h_{i}(\mathbf{x}(j)) - H_{UB_{i}}) +  \mathcal{F}(H_{LB_{i}} - h_{i}(\mathbf{x}(j))) \right)
    \label{eqn:obj-fn-2}
    \end{align}
where $m$ is the number of attacker constraints. $\mathcal{F}(x)$ is an indicator function used to quantify constraint violations and is defined as $\mathcal{F}(x) = 0, \ \text{if} \ x < 0$ and $\mathcal{F}(x) = x, \ \text{if} \ x \ge 0$. $\lambda \text{ and } \lambda'$ represent the weights associated with each constraint.
To generate an attacking trajectory, the sum of both the objective functions given by \eqref{eqn:obj-fn-1} and \eqref{eqn:obj-fn-2} is minimized. Thus, the objective function for the optimization problem is given by
\begin{equation}
    \mathcal{J} = \mathcal{J}_T + \mathcal{J}_A.
\end{equation}
\begin{table}[htbp]
    \caption{Table showing the various targets considered in this work and their associated constraints used for dataset generation.}
    \centering
    \begin{tabular}{lp{7em}p{5em}p{5em}p{5em}p{5em}}
    \hline \hline
    \textbf{Target Type} & \textbf{Altitude (ft)} & \textbf{Velocity (m/s)} & \textbf{Climb Rate (m/s)} & \textbf{Turn Rate (rad/s)} & \textbf{Acceleration (g)}  \\
    \hline 
    Air-to-air missile & 100 to 100,000 & 200 to 500 &  100 to 200 & 0.3 to 0.4 & 1 to 10 \\
    Bomber   & 0 to 50,000 & 75 to 105 &  2 to 25 & 0.05 to 0.2 & -1 to 4 \\
    Civilian commercial airliner & 0 to 45,000 & 75 to 105 &  2 to 40 & 0.05 to 0.2 & 0 to 3 \\
    Civilian helicopter   & 0 to 15,000 & 0 to 200 &  2 to 20 & 0.05 to 0.1 & 0 to 2 \\ 
    Military helicopter   & 0 to 20,000 & 0 to 250 &  2 to 28 & 0.05 to 0.1 & 0 to 3 \\
    Multi-role fighter   & 0 to 50,000 & 50 to 390 &  2 to 110 & 0.05 to 0.4 & -3 to 9\\
   Small civilian aircraft & 0 to 36,000 & 75 to 105 &  2 to 35 & 0.05 to 0.2 & 0 to 3 \\
   Small unmanned aircraft & 0 to 10,000 & 20 to 50 &  2 to 10 & 0.05 to 0.1 & 0.5 to 4 \\
    \hline \hline
    \end{tabular}
    \label{tab:targetTypes}
\end{table}

In addition to features derived from target kinematics, the target RCS is also used as an input feature for target type classification. RCS is the measure of the incident electromagnetic energy reflected or scattered back towards the radar source. Several software tools are available for computing the RCS of complex-shaped objects, as discussed in \cite{sehgal2019automatic}. In this work, we employ the \textit{MGL-FASTRCS} tool \cite{cakir2008radar,cakir2014fdtd}, chosen for its computational efficiency and parallelization capabilities, to estimate the RCS of various aerial targets under consideration.


\section{Experimental Analysis}
\label{sec::experimental_evaluation}
This section presents a comprehensive analysis of the proposed integrated target type classification and intent prediction framework. 
As discussed in Section~\ref{sec:related}, since there is no common, publicly available benchmark dataset for military target intention recognition, and each study uses highly specialized custom datasets with unique features and labeling conventions, it is not feasible to directly compare our approach with existing techniques. Therefore, performance of the framework is evaluated against baseline approaches using key metrics such as classification accuracy, intent prediction accuracy, effectiveness of belief combination, and computational efficiency. The evaluation is carried out on the aerial target dataset described in the preceding section, which consists of time-series data representing one-to-one engagements between the ownship and an aerial target in a combat scenario. All classifiers were trained and tested on a system running Ubuntu 22.04.3 LTS, equipped with a 11th Gen Intel® Core™ i9-11900 CPU @ 2.50 GHz and 31 GB of RAM. 

\subsection{Input Dataset}
The aerial target dataset comprises $500$ trajectories for each target class. Each trajectory is sampled at \(50~\text{Hz}\) over a duration of $20$ seconds, resulting in $1000$ time steps per trajectory. Every trajectory is characterized by $13$ distinct features: velocity components \((u_x, u_y, u_z)\), acceleration components \((a_x, a_y, a_z)\), altitude \((z)\), turn rate \((\dot{\phi})\), radar cross-section (RCS), range \((r)\), attacker deviation angle \((\lambda_A)\), off-angle \((\eta)\), and aspect angle \((\epsilon)\). These features are either directly obtained from onboard sensors such as accelerometers, gyroscopes, and radar, or derived from such measurements. Collectively, they provide strong discriminative power by capturing the physical and dynamic characteristics of different targets. For ease of reference, the eight target classes are labeled as follows: \texttt{Bomber} - $c_1$, \texttt{Civilian commercial airliner} - $c_2$, \texttt{Civilian helicopter} - $c_3$, \texttt{Air-to-air missile} - $c_4$, \texttt{Military helicopter} - $c_5$, \texttt{Multi-role fighter} - $c_6$, \texttt{Small civilian aircraft} - $c_7$, and \texttt{Small unmanned aircraft} - $c_8$. In addition to the target class, each trajectory is also characterized by its engagement state, which can be either \texttt{attacking} or \texttt{non-attacking}. It is emphasized that, although the true target class remains invariant over the entire trajectory, the engagement state is dynamic and may transition between \texttt{attacking} and \texttt{non-attacking} modes within a trajectory. The generated trajectory dataset is partitioned into training set ($70\%$), testing set ($20\%$), and validation set ($10\%$). Each trajectory, consisting of $1000$ time steps with $13$ distinct features per step, is divided into non-overlapping subsamples of length $50$, resulting in $20$ subsamples per trajectory. 
\subsection{Target Type Classification} \label{ss:target_type_results}
The target type classification module uses only a subset of the $13$ features, namely \(u_x, u_y, u_z\), \(a_x, a_y, a_z\), \(z\), \(\dot{\phi}\), and $RCS$. Statistical features are extracted from each subsample as mentioned in section~\ref{ss:class_features}. After feature extraction, the resulting representations are used to train the three classifiers used in the target type classification module. FFNN and SVM are trained on the derived statistical features, while the LSTM classifier directly processes raw subsamples. The optimal hyper-parameters for each classifier are summarized in Table~\ref{tab:hyperparameter}. To evaluate each classifier's performance, the average classification accuracy on the test set is computed using the following metric:
\begin{equation}
Accuracy = \frac{Number \ of \ Subsamples \ Correctly \ Predicted}{Total\  Number\  of \ Subsamples}.
\label{eqn:subsample_accuracy}
\end{equation}

\begin{table}[ht]
\centering
\caption{Optimized hyper-parameters for classifiers used in the target type classification module.}
\label{tab:hyperparameter}
\begin{tabular}{lll} 
\hline \hline
\textbf{Classifier} & \textbf{Hyperparameter} & \textbf{Optimal Value} \\
\hline
\multirow{2}{*}{SVM} 
  & Regularization Parameter ($C$) & 2  \\ 
  & Kernel Function & RBF   \\ 
\hline
\multirow{7}{*}{FFNN} 
  & Activation Function & Sigmoid    \\ 
  & Hidden Layer 1 Size & 99     \\ 
  & Hidden Layer 2 Size & 99     \\ 
  & Optimizer & Adam     \\ 
  & Dropout Rate & 0.2 \\
  & Maximum Iterations & 300    \\ 
  & Early Stopping & Enabled  \\
\hline
\multirow{4}{*}{LSTM} 
  & LSTM Layer 1 Size & 9 \\ 
  & LSTM Layer 2 Size & 9 \\
  & Maximum Epochs & 300 \\
  & Early Stopping Patience & 5  \\
\hline \hline
\end{tabular}
\end{table}

Based on the metric defined in \eqref{eqn:subsample_accuracy}, the FFNN classifier achieves the highest accuracy of \(81\%\), followed by the SVM classifier at \(76\%\), and the LSTM classifier at \(74\%\) on the testing set. These classifiers generate a probability distribution over the set of eight target classes $c_i$ for $i=1,2,\ldots,8$. The probability distribution from classifier $\mathbb{C}_j$ for subsample $i$ is mapped to a belief distribution using \eqref{eqn:bba} resulting in $\overline{m}_{i,j}$, which is a $10 \times 1$ vector given by $\{\overline{m}_{i,j}(c_1), \overline{m}_{i,j}(c_2), \ldots, \overline{m}_{i,j}(c_8), \overline{m}_{i,j}(\Omega), \overline{m}_{i,j}(\emptyset)\}$. It is found that targets such as \texttt{Bomber} and \texttt{Civilian commercial airliner} exhibit low classification accuracy as they possess closely aligned kinematic features and RCS values, as shown in Table~\ref{tab:targetTypes}. A similar observation is made for the target group consisting of \texttt{Civilian helicopter} and \texttt{Military helicopter}. The similarity in the features presents significant challenges and to overcome this issue, we create additional focal sets $\theta_1=\left\{c_3, c_5\right\}$ and $\theta_2=\left\{c_1, c_2\right\}$ to represent uncertainty within these two target groups as discussed in Section~\ref{ss:class_mapping}. The BBA with the new focal elements is created using \eqref{eqn:bba_updated}, which results in $\{m_{i,j}(c_1), m_{i,j}(c_2), \ldots, m_{i,j}(c_8), m_{i,j}(\theta_1),m_{i,j}(\theta_2),m_{i,j}(\Omega), m_{i,j}(\emptyset)\}$. The resulting belief distributions are combined across multiple classifiers and over subsamples using \eqref{eqn:generalized_yager}. Now that the classifiers are trained, we evaluate the performance of the target type classification module over entire trajectories. We employ additional trajectory-level evaluation criteria beyond standard accuracy metrics. Specifically, we define the following two criteria to quantify the average classification accuracy for a trajectory: 

\noindent \textit{(a) Non-Consecutive (NC) criterion} - A trajectory is said to be correctly classified if the target type classification module in Algorithm~1 makes a correct prediction in more than ten subsamples, regardless of whether these correctly predicted subsamples are consecutive. Since the output of the target classification algorithm is a belief distribution, a prediction is deemed correct if the class or focal set pertaining to the maximum belief value contains the true target class. The average accuracy with the NC criterion is then defined as 
\begin{equation}
Accuracy_{NC} = \frac{\beta}{n} \times 100\%
\end{equation}
where $\beta$ is the number of trajectories in which more than ten non-consecutive windows are correctly predicted, and $n$ is the total number of trajectories tested.  

\noindent \textit{(b) Consecutive (C) criterion} - If the algorithm makes a correct prediction in more than ten consecutive subsamples, the trajectory is said to be correctly classified. The resulting average accuracy is defined as 
\begin{equation}
Accuracy_{C} = \frac{\gamma}{n} \times 100\%
\end{equation}
where $\gamma$ is the number of trajectories in which more than ten consecutive windows are correctly predicted. 

\begin{table}[ht]
    \centering
    \caption{Average accuracies of different approaches on the aerial target dataset.}
    \begin{tabular}{lccc}
        \hline \hline
        \textbf{Method} & \textbf{Accuracy} & $\boldsymbol{\mathrm{Accuracy}_{C}}$ & $\boldsymbol{\mathrm{Accuracy}_{NC}}$ \\
        \hline
        FFNN & 0.81  & 0.73 & 0.82  \\
        LSTM & 0.74
        
        & 0.68 & 0.78 \\
        SVM & 0.76 & 0.65 & 0.77 \\
        $DS_{LSTM  \,\bigoplus\, SVM}$ & 0.79 & 0.70 & 0.81  \\
        $DS_{FFNN  \,\bigoplus\, LSTM}$ & 0.81 & 0.75 & 0.82  \\
        $DS_{FFNN  \,\bigoplus\, SVM}$ & 0.81 & 0.72 & 0.82  \\ 
         $DS_{FFNN \,\bigoplus\ SVM \bigoplus LSTM}$ & 0.84 & 0.81 & 0.84 \\
       $DS_{FFNN \,\bigoplus\, SVM}^{(1:t)}$ & 0.84 & 0.83 & 0.84  \\
        $DS_{FFNN \,\bigoplus\, LSTM}^{(1:t)}$ & 0.83 & 0.82 & 0.84  \\
        $DS_{SVM \,\bigoplus\, LSTM}^{(1:t)}$ & 0.83  & 0.82 & 0.83   \\
        $DS_{FFNN \,\bigoplus\ SVM \bigoplus LSTM }^{(1:t)}$ & 0.85 & 0.83 & 0.85 \\
        Proposed Approach &\bf{0.88} & 0.85 & \bf{0.89} \\
        \hline \hline
    \end{tabular}
    \label{tab:Comparison}
    \vspace{5mm}
\end{table}

\begin{figure}[ht]
    \centering
    \begin{subfigure}{0.49\textwidth}
        \centering        \includegraphics[width=\linewidth]{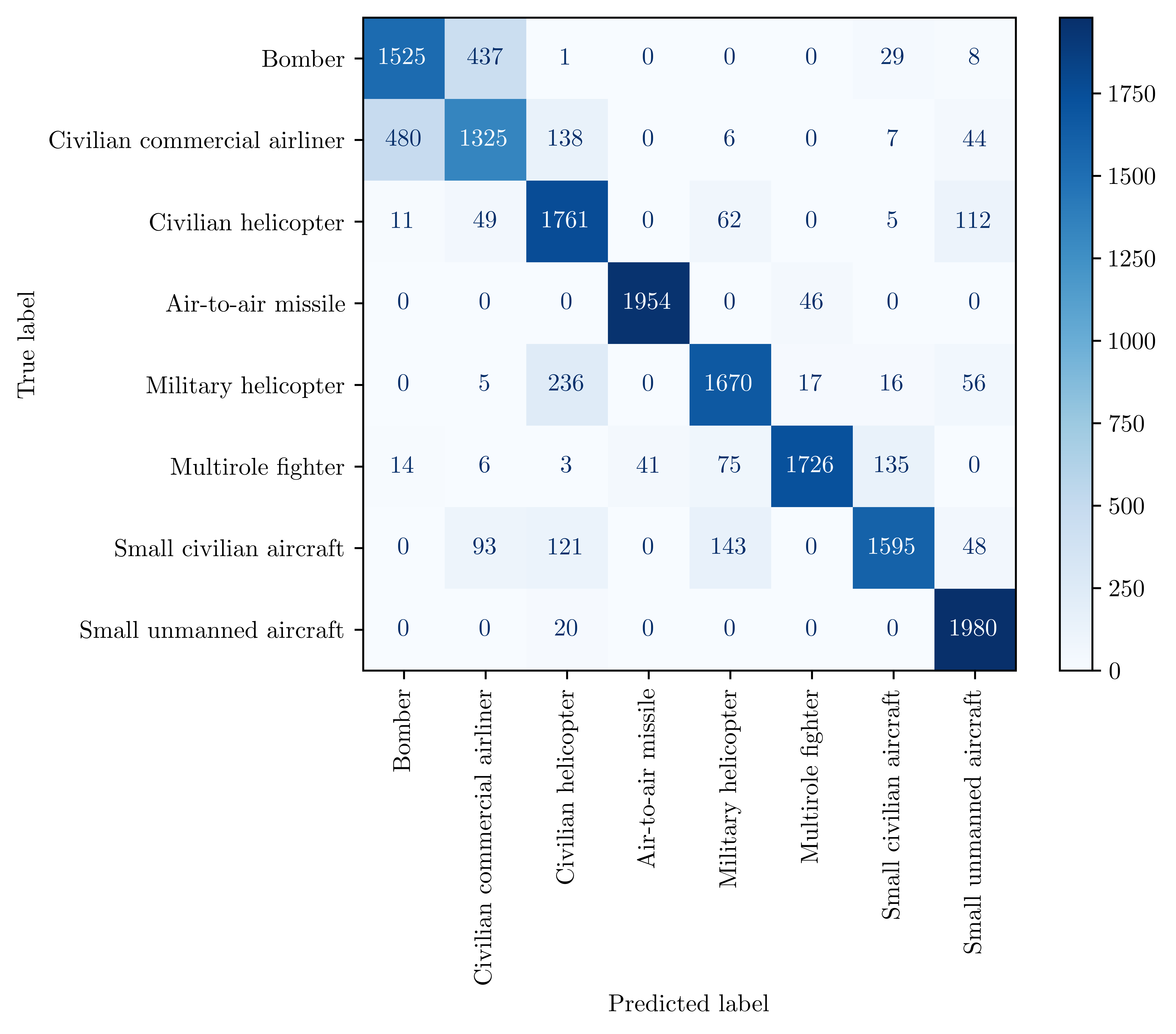}
        \caption{Baseline DS Method ($DS_{FFNN \bigoplus SVM \bigoplus LSTM}$)}
    \end{subfigure}
    \hfill
    \begin{subfigure}{0.49\textwidth}
        \centering
        \includegraphics[width=\linewidth]{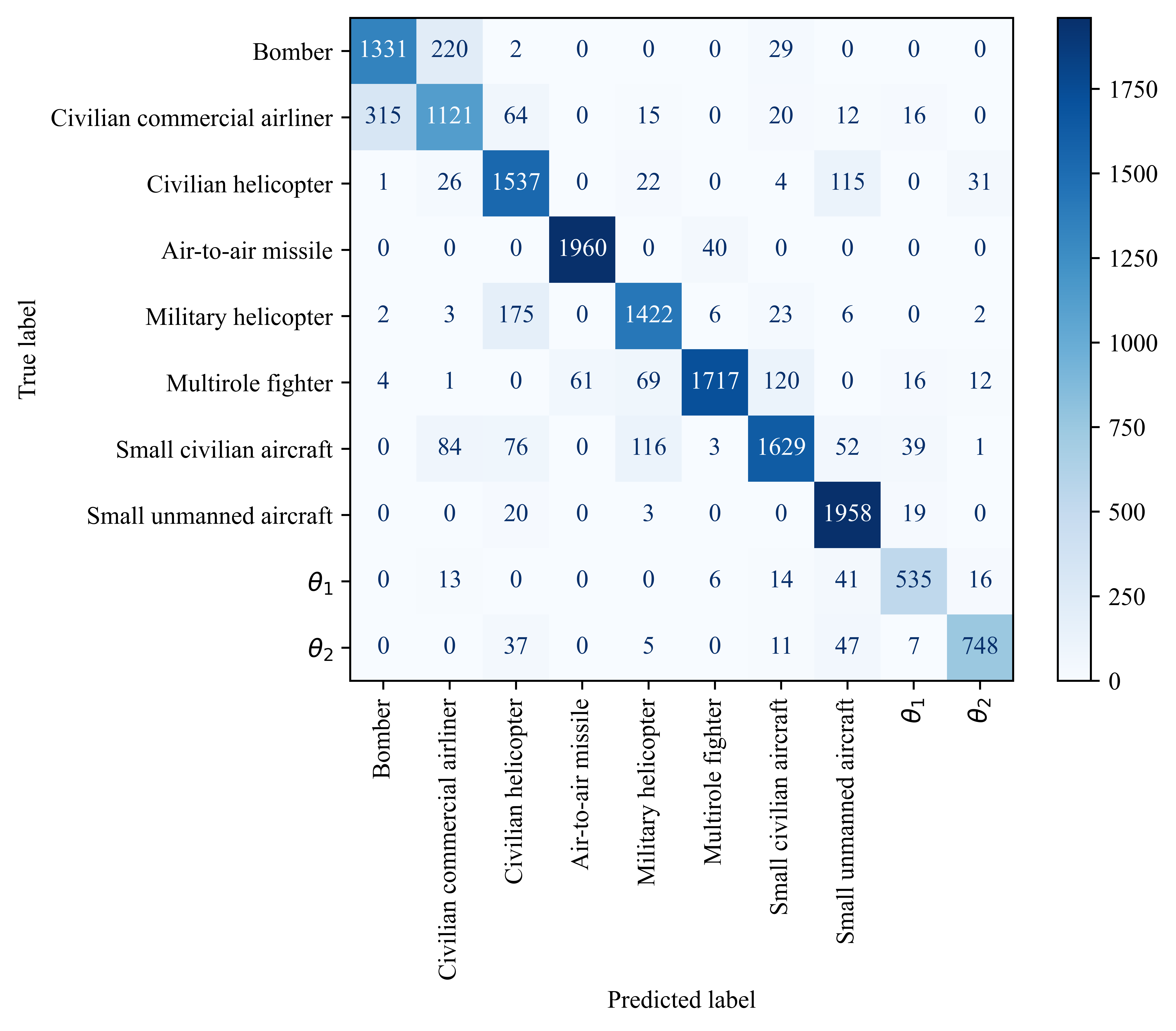}
        \caption{Proposed Approach}
    \end{subfigure}
    \caption{Confusion matrices resulting from evaluations using the proposed approach and the baseline DS method.}
    \label{fig:Confusion_Matrix_Comparison}
\end{figure}

With both the evaluation criteria defined, we now compare the performance of our proposed framework with a baseline method that employs the DS combination rule. In this baseline, the evidences are fused using the DS combination rule defined in \eqref{eqn:dempster_combi}, instead of the proposed rule in \eqref{eqn:generalized_yager}. The results, as shown in Table~\ref{tab:Comparison} and Fig.~\ref{fig:early_prediction}, indicate that our approach outperforms the baseline in several key areas.
\subsubsection{Ablation Study}
In Table~\ref{tab:Comparison}, the second column corresponds to the accuracy metric defined in (26), evaluated over subsamples following belief fusion. The ablation study reveals that individual classifiers (FFNN, LSTM, SVM) provide only moderate performance when used in isolation. However, progressively introducing DS-based fusion yields consistent improvements across all accuracy measures, demonstrating the complementary nature of the classifiers. The proposed approach achieves the highest average accuracy over the test trajectories, surpassing all individual and pairwise fusion baselines. With respect to standard DS fusion, the gain under the standard accuracy metric appears marginal; the improvements become substantially more pronounced under the NC criterion, where robustness to non-category samples is critical. In contrast, performance under the more restrictive C criterion remains comparable to the DS fusion, indicating that the primary benefits of the proposed architecture are realized in more challenging, ambiguity-driven scenarios. Second, a major benefit of our approach is the reduction in false predictions, which is clearly illustrated in the confusion matrix shown in Fig.~\ref{fig:Confusion_Matrix_Comparison}. For instance, the number of misclassified subsamples for the trajectories pertaining to the bomber decreases from $475$ to $251$, highlighting improved reliability in target classification with the proposed~approach. 

We also assessed the early prediction capability of the proposed framework by looking at the subsample starting from which Algorithm 1 correctly predicts the target class. Figure~\ref{fig:early_prediction} depicts a bar graph, where the $x$-axis represents different targets and the $y$-axis indicates the number of trajectories that are correctly classified under the C criterion starting from a specific subsample. The results are particularly promising for \texttt{Air-to-air missile} and \texttt{Small unmanned aircraft}, where more than \(95\%\) of trajectories are correctly classified from the first subsample itself. For targets such as \texttt{Bomber}, \texttt{Civilian helicopter}, \texttt{Multi-role fighter}, and \texttt{Small civilian aircraft}, over \(75\%\) of trajectories are correctly classified from the first subsample. For other cases, consistent correct classification is observed somewhere between the \(2^{nd}\) and \(10^{th}\) subsample. These findings demonstrate the capability of the proposed approach in achieving early and stable predictions. Consequently, it can provide reliable predictions without the need to analyze the complete trajectory, a key advantage in aerial combat scenarios where timely threat assessment and rapid response are critical. In the subsequent sections, we show that as time progresses (i.e., with more subsamples), the proposed approach yields predictions with increasingly higher belief values, and therefore, greater confidence.

\begin{figure}
    \centering
    \includegraphics[width=0.8\textwidth]{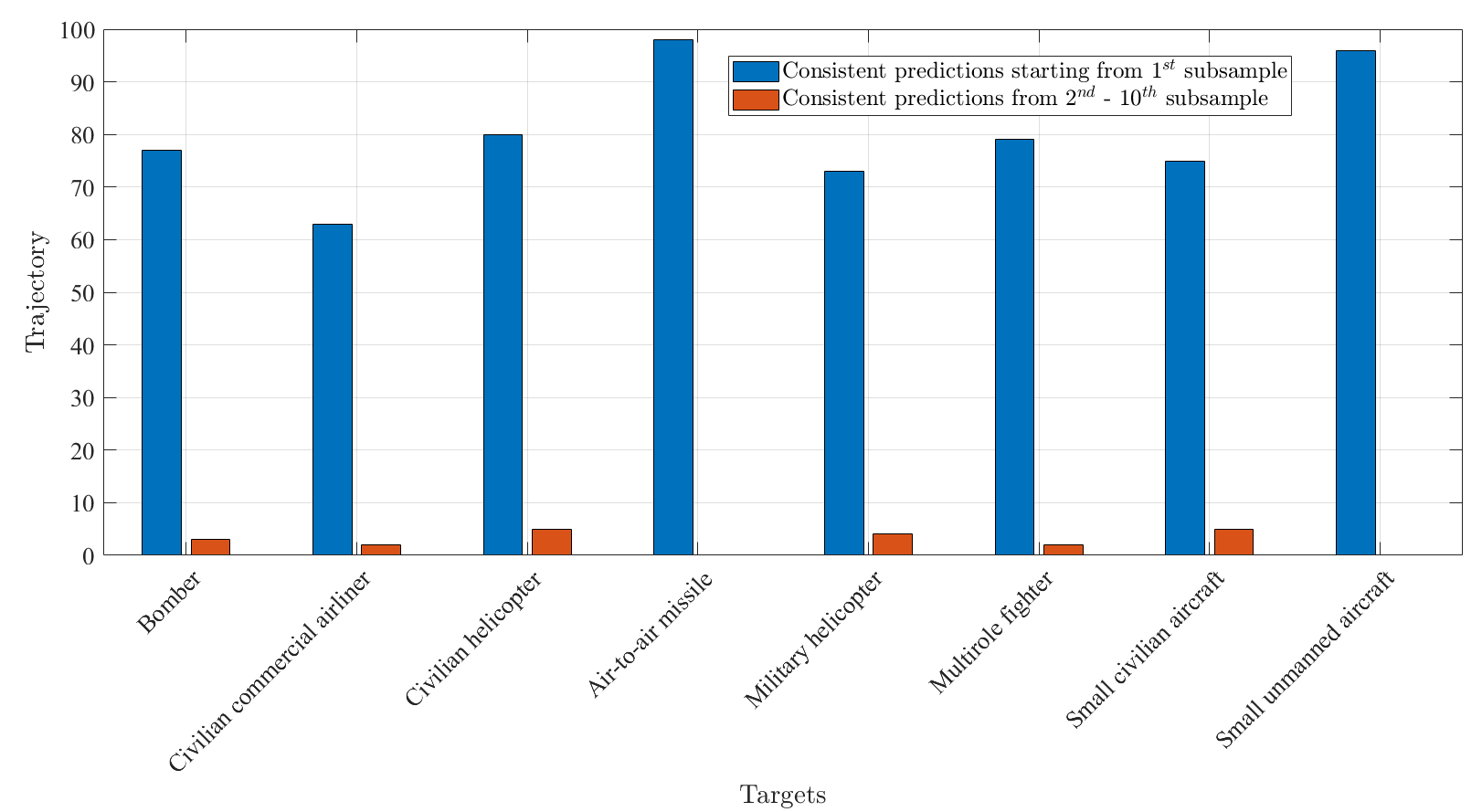}
    \caption{The bar graph shows the number of trajectories with consistent predictions starting from $1^{st}$ and $(2^{nd}-10^{th})$ subsample.}
    \label{fig:early_prediction}
\end{figure}

\subsection{Target Intent Prediction}
The target intent prediction module uses the four features--\(r\), \(\lambda_A\), \(\eta\), and \(\epsilon\)--along with the belief distribution generated by the target type classification module. As discussed in Section~\ref{subsec:combat-geo}, these features are easily extracted from the combat geometry using  \eqref{eqn:combat_features}. In the aerial target dataset, the subsamples are labeled as either \texttt{attacking} or \texttt{non-attacking} using the rules mentioned in Section~\ref{ss:rules_combat}. With the availability of target type belief distribution, we categorize each target into one of three threat levels as shown in Table~\ref{tab:target_levels}. The threat levels are determined based on the target’s known weapon capabilities, operational roles, and strategic intent. With the threat levels defined, the belief output from the target type classification module is converted into a membership value using~\eqref{eqn:membership_value}. 
\begin{table}
\centering
\caption{The main parameters of the three target type levels.}
\label{tab:target_levels}
\begin{tabular}{llll}
\hline \hline
\textbf{Level} & \textbf{Target Type} & \textbf{Weapons} & \textbf{Goals} \\
\hline
\multirow{3}{*}{I} 
& Bomber & Missiles, Bombs & \multirow{3}{*}{Strategic \& Tactical} \\
& Air-to-air missile & Guns & \\
& Multi-role fighter & Electronic Warfare & \\
\hline
\multirow{4}{*}{II} 
& Small unmanned aircraft &  & Reconnaissance \\
& Military helicopter & Drop munitions & Surveillance \\
& $\theta_1$, $\theta_2$ & Gun & \\
& $\Omega$ &  & Target Acquisition \\
\hline
\multirow{3}{*}{III} 
& Civilian commercial airliner & \multirow{3}{*}{No Weapon} & Goods \& Civilian Transport \\
& Civilian helicopter &  & Recreational \& \\
& Small civilian aircraft &  & Surveillance \\
\hline \hline
\end{tabular}
\end{table}
Besides the three-level membership degree, three rule-based features are defined using the combat geometry features (see Section~\ref{ss:rules_combat}), resulting in a feature vector of size $6 \times 1$. Based on information about the target type and the behavioral labels, i.e., \texttt{attacking} or \texttt{non-attacking}, we define the following three intent labels:
\begin{itemize}
    \item[(1)] \textit{Hostile}: Target behavior is \texttt{attacking} and it belongs to either Level I or Level II. Example: If the target is \texttt{attacking} and is either a \texttt{Multi-role fighter} or a \texttt{Military helicopter}, the possibility of the target being hostile towards the ownship is high.
    \item[(2)] \textit{Non Hostile}: Target behavior is \texttt{non-attacking} and it belongs to Level II or Level III. Example: If the target is a \texttt{Civilian commercial airliner} or \texttt{Military helicopter}, and it is \texttt{non-attacking}, the target should be classified as non-hostile.
    \item[(3)] \textit{Suspicious}: A target is classified as suspicious if it meets either of the following conditions: it exhibits \texttt{attacking} behavior and belongs to either Level II or Level III; or it exhibits \texttt{non-attacking} behavior and belongs to Level I. Example: A \texttt{Civilian commercial airliner} displaying \texttt{attacking} behavior is considered suspicious.  Although the target is a Level III entity, its behavior strongly suggests a potential threat.
\end{itemize}
Using the six element feature vector (three membership values and the three rule-base features), a rule-based decision tree model is trained, and an accuracy of $93\%$ is achieved for the testing set. The resulting probability distribution from the classifier is converted to a belief distribution using \eqref{eqn:bba}. Subsequently, the belief distribution is combined across subsamples using \eqref{eqn:comb_rule_dist} (see Section~\ref{ss:intent_comb_rule}). To evaluate the effectiveness of the intent prediction module, we compare the performance of the following three approaches: a baseline decision tree classifier, the conventional DS combination method, and our proposed distance-based adaptive fusion method.
For comparison, the belief is converted back to a probability distribution using \(betp\), which is defined in \eqref{eqn:betp}. Fig.~\ref{fig:proposed_compare} presents a comparative evaluation of class probability distributions across a sequence of subsamples for the three approaches. Each subplot illustrates how these distributions evolve over the subsamples, with the horizontal axis indicating the subsample index and the vertical axis representing the predicted probabilities. The ground-truth label for each subsample is shown in the bottom row using distinct markers. Notably, the conventional DS framework begins to misclassify as early as the fifth subsample. Its rigid fusion rule, which is sensitive to highly conflicting evidence, causes a gradual deterioration in classification performance. In contrast, our proposed framework introduces a novel distance-based adaptive belief fusion mechanism. By dynamically adjusting the fusion process based on evolving belief divergence, it significantly improves the robustness of intent inference over time. This approach not only reduces misclassification but also maintains high predictive confidence across subsamples, as evidenced in Table~\ref{fig:dt_accuracy}. Although the baseline decision tree classifier achieves a high initial accuracy of $93\%$, its static nature limits performance consistency. Or in other words, even when the decision tree classifier makes the correct prediction, the confidence in the prediction is low compared to the proposed approach, as observed from the first subplot in Fig.~\ref{fig:proposed_compare}. The conventional DS method achieves only $85\%$ accuracy due to overcommitment to early evidence. In contrast, our method maintains both high accuracy $93\%$ and resilience to temporal misclassifications, demonstrating the advantage of adaptive belief combination in dynamic combat scenarios.

\begin{figure}[h]
    \centering
    \includegraphics[width=0.8\textwidth]{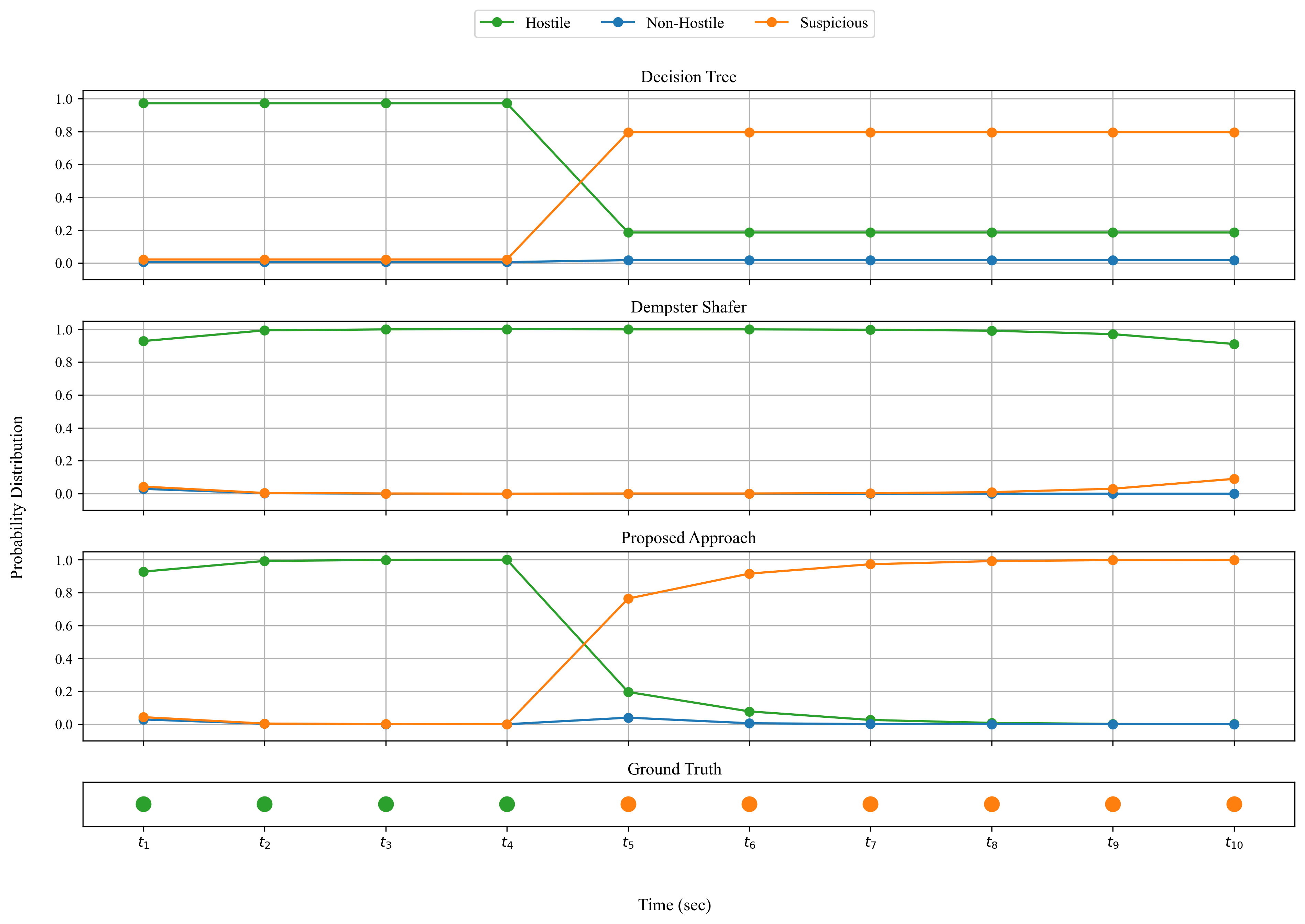}
    \caption{Comparison of outputs from the Decision tree, DS combination method, and the proposed method across successive subsamples.}
    \label{fig:proposed_compare}
\end{figure}
\begin{table}[h]
\centering
\caption{Performance metrics for intent prediction.}
\begin{tabular}{lc} 
\hline \hline
 \textbf{Methods} & \textbf{ Accuracy} \\
\hline
Decision Tree  & 0.93    \\
$DS_{Decision \ Tree} ^{1:t}$ & 0.85  \\
Proposed Approach & \underline{0.93} \\
\hline \hline
\end{tabular}
\label{fig:dt_accuracy}
\end{table}
\subsubsection{Sensitivity Analysis of the Proposed Approach}
A detailed sensitivity analysis is performed in which the test data are corrupted with either additive Gaussian noise or multiplicative noise prior to prediction. Let $x \in \mathbb{R}^d$ denote a clean, scaled input feature vector. For additive Gaussian noise, the perturbed input is defined as
\begin{equation}
\tilde{x} = x + n, 
\qquad 
n \sim \mathcal{N}\!\left(0,\, \sigma^2 I\right),
\label{eq:add_noise}
\end{equation}
where $\sigma \in [0,1]$ denotes the perturbation level and $I$ is the identity matrix, indicating that independent Gaussian perturbations are applied to each feature. For multiplicative noise, the corrupted feature vector is given by
\begin{equation}
\tilde{x} = x \odot (1 + \epsilon), 
\qquad 
\epsilon \sim \mathcal{N}\!\left(0,\, \sigma^2 I\right),
\label{eq:mult_noise}
\end{equation}
where $\odot$ denotes element-wise multiplication. In this case, $\eta$ again controls the normalized noise intensity, inducing proportional distortions in each feature dimension. Table~\ref{tab:noise} reports the classification performance under progressively increasing values of $\sigma$. The results indicate that the proposed approach maintains high predictive accuracy for small and moderate $\sigma$, with a gradual and interpretable degradation as $\sigma$ increases. When $\sigma \geq 0.6$, the magnitude of the perturbations becomes comparable to the scaled feature values, resulting in a pronounced decline in accuracy. These high-noise cases thus characterize the robustness limits of the method, whereas smaller values of $\sigma$ correspond to operationally realistic sensing conditions. 

\begin{table*}[h]
\centering
\caption{Sensitivity analysis of target-type classification and intent prediction under different noise types and perturbation levels (\%).}
\begin{tabular}{c c c c}
\hline
\hline
\textbf{Noise Type} & 
\textbf{Perturbation Level} & 
\textbf{Accuracy\(_{\text{Type Classification}}\)} & 
\textbf{Accuracy\(_{\text{Intent Prediction}}\)} \\
\hline
\hline
Gaussian      & 0.00 & 87.49 & 93.16 \\
              & 0.10 & 87.19 & 92.66 \\
              & 0.20 & 86.79 & 89.88 \\
              & 0.30 & 85.87 & 87.24 \\
              & 0.40 & 83.88 & 83.35 \\
              & 0.50 & 82.05 & 80.98 \\
              & 0.60 & 77.77 & 77.62 \\
              & 0.70 & 72.56 & 74.73 \\
\hline
Multiplicative & 0.00 & 87.49 & 93.16 \\
               & 0.10 & 87.41 & 92.63 \\
               & 0.20 & 87.36 & 90.64 \\
               & 0.30 & 87.18 & 88.41 \\
               & 0.40 & 86.44 & 85.81 \\
               & 0.50 & 86.58 & 84.75 \\
               & 0.60 & 84.10 & 83.33 \\
               & 0.70 & 83.43 & 83.00 \\
\hline
\hline
\end{tabular}
\label{tab:noise}
\end{table*}

\begin{figure}[h]
    \centering
    \begin{subfigure}{0.32\textwidth}
        \centering
        \includegraphics[width=\linewidth]{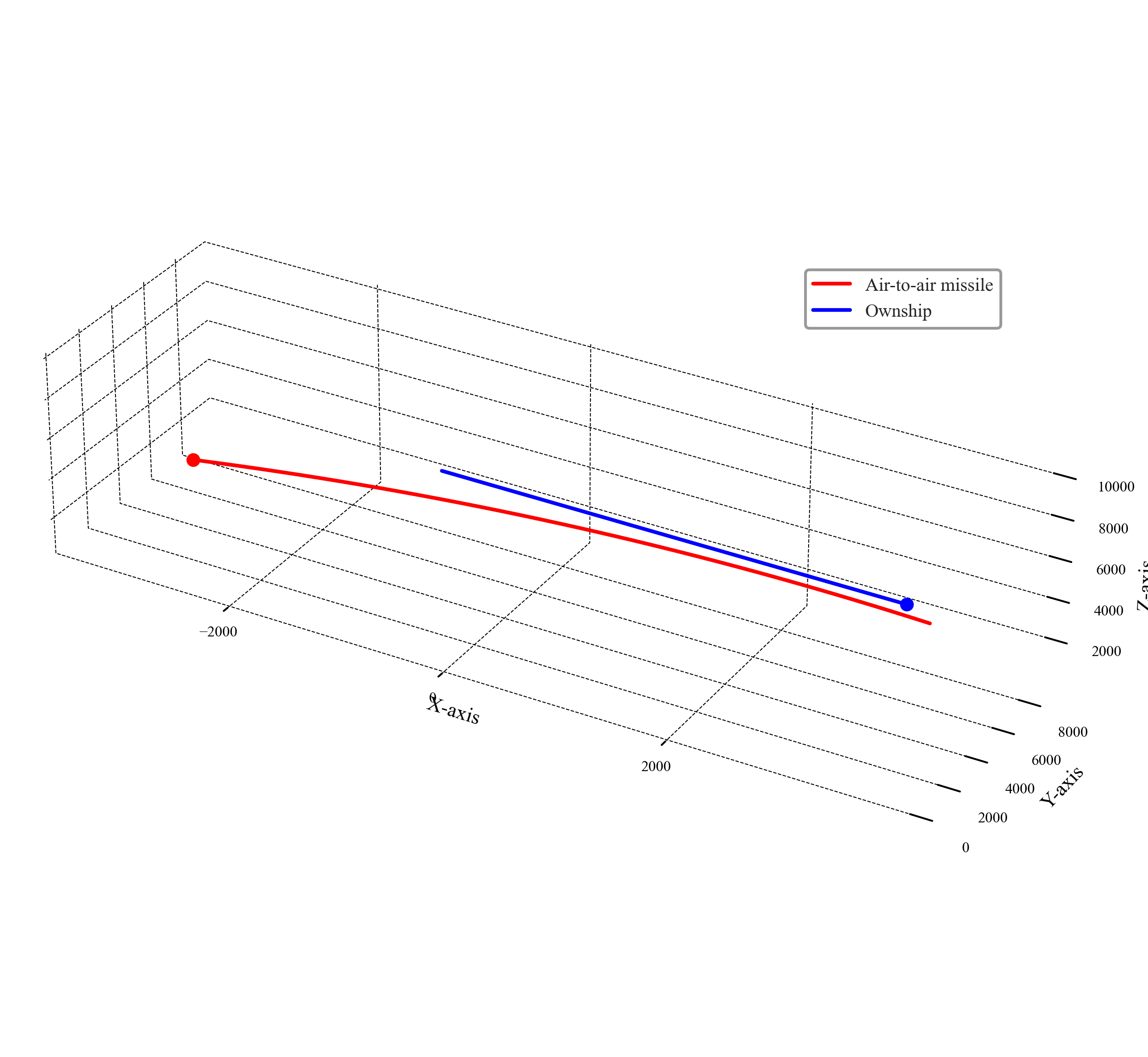}
        \caption{Trajectory: Air-to-air missile vs Ownship}
        \label{subfig:aam-vs-own}
    \end{subfigure}%
    \hfill
    \begin{subfigure}{0.32\textwidth}
        \centering
        \includegraphics[width=\linewidth]{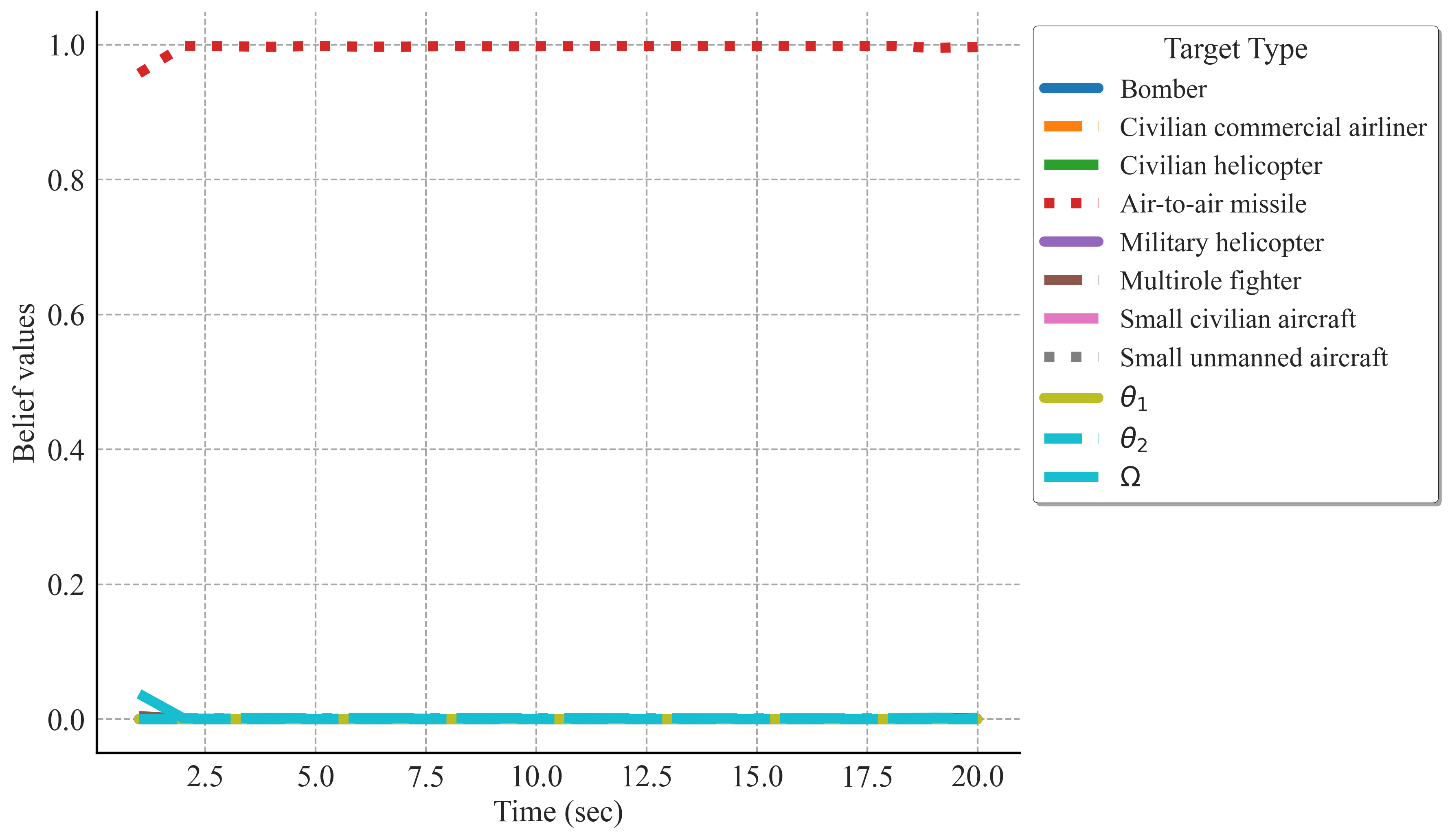}
        \caption{Type classification}
        \label{subfig:aam-vs-type}
    \end{subfigure}%
    \hfill
    \begin{subfigure}{0.32\textwidth}
        \centering
        \includegraphics[width=\linewidth]{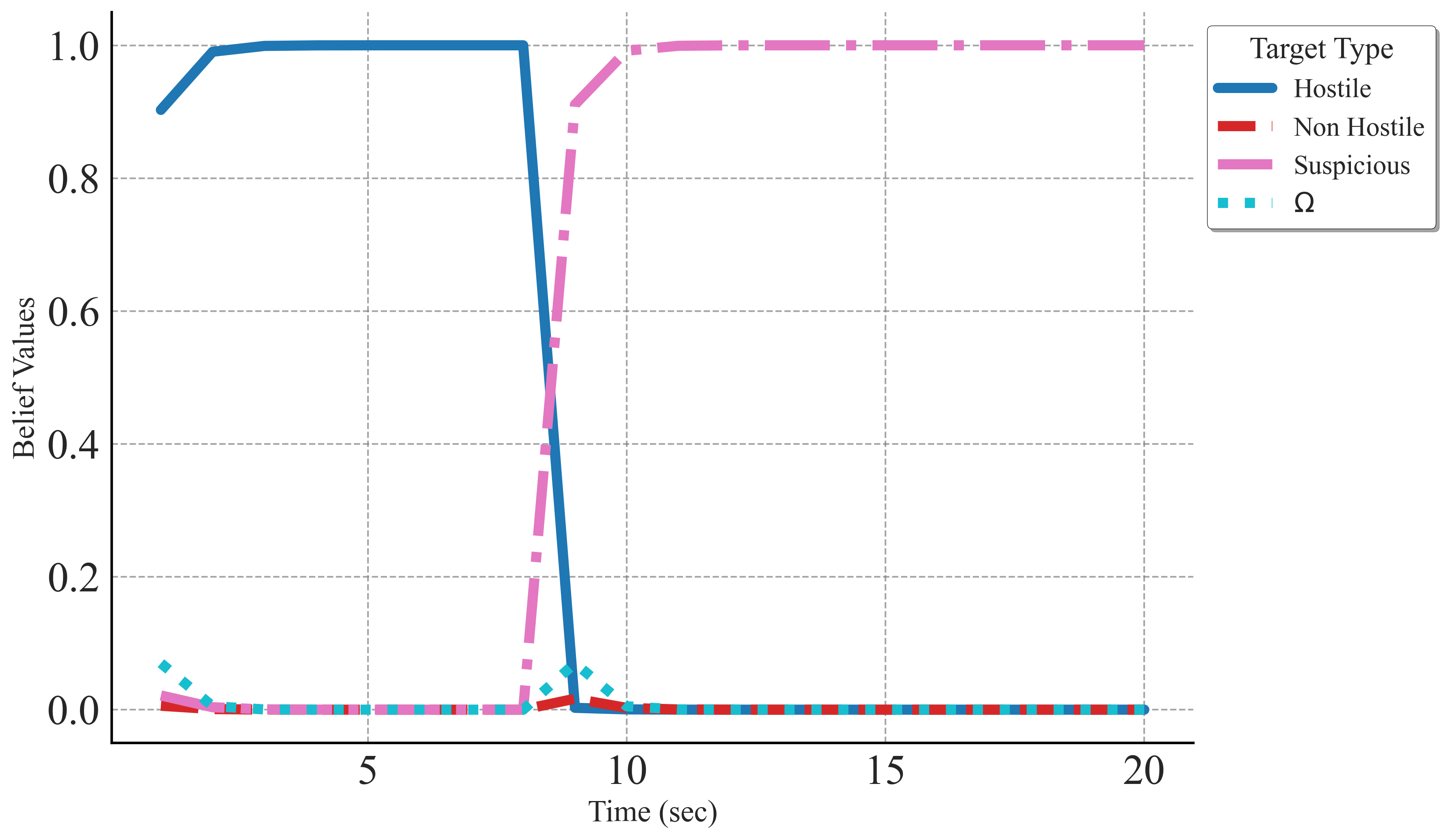}
        \caption{Intent prediction}
        \label{subfig:aam-vs-intent}
    \end{subfigure}

    \vspace{0.4em} 

    \begin{subfigure}{0.32\textwidth}
        \centering
        \includegraphics[width=\linewidth]{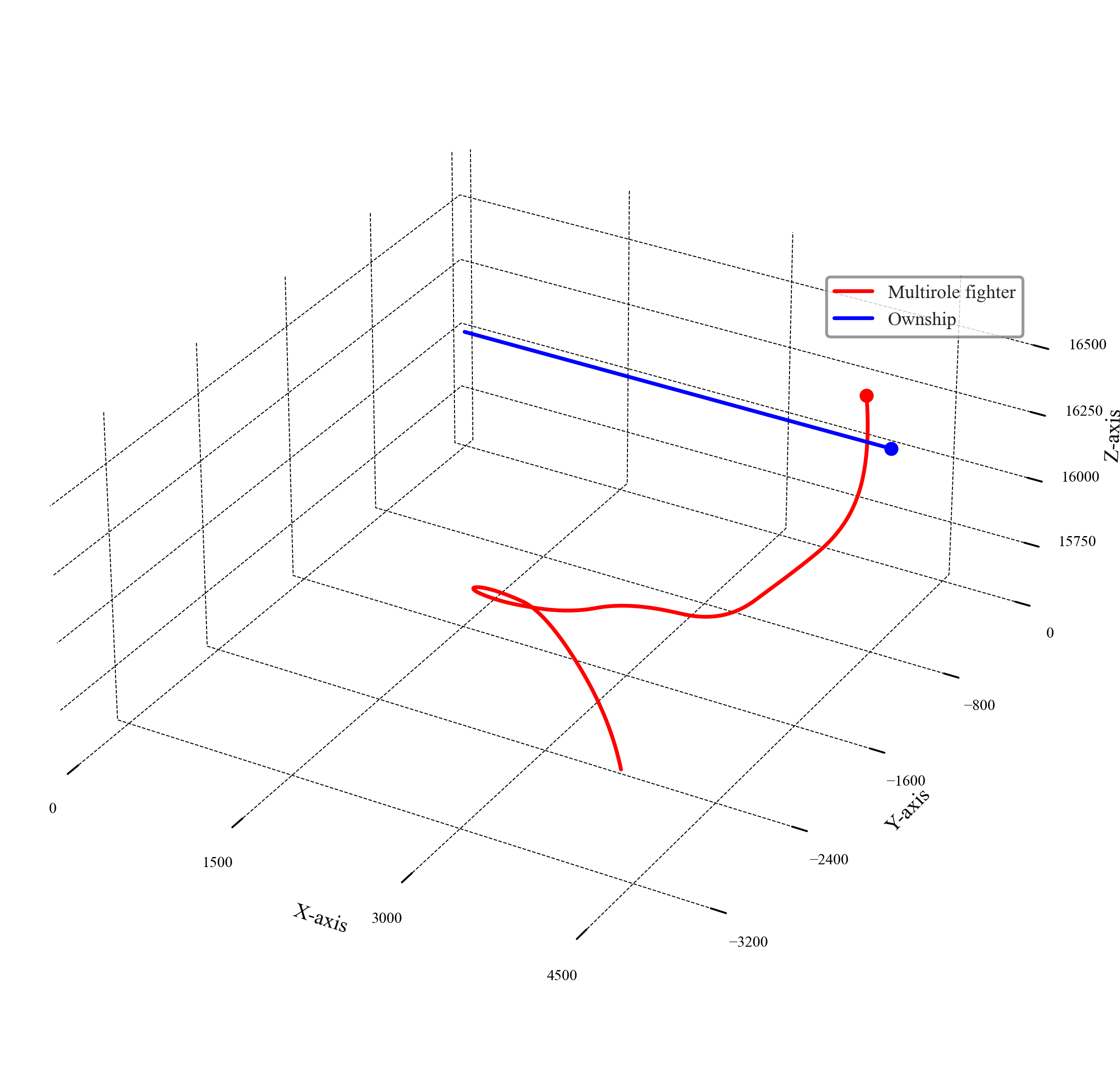}
        \caption{Trajectory: Multi-role fighter vs Ownship}
        \label{subfig:mf-vs-own}
    \end{subfigure}%
    \hfill
    \begin{subfigure}{0.32\textwidth}
        \centering
        \includegraphics[width=\linewidth]{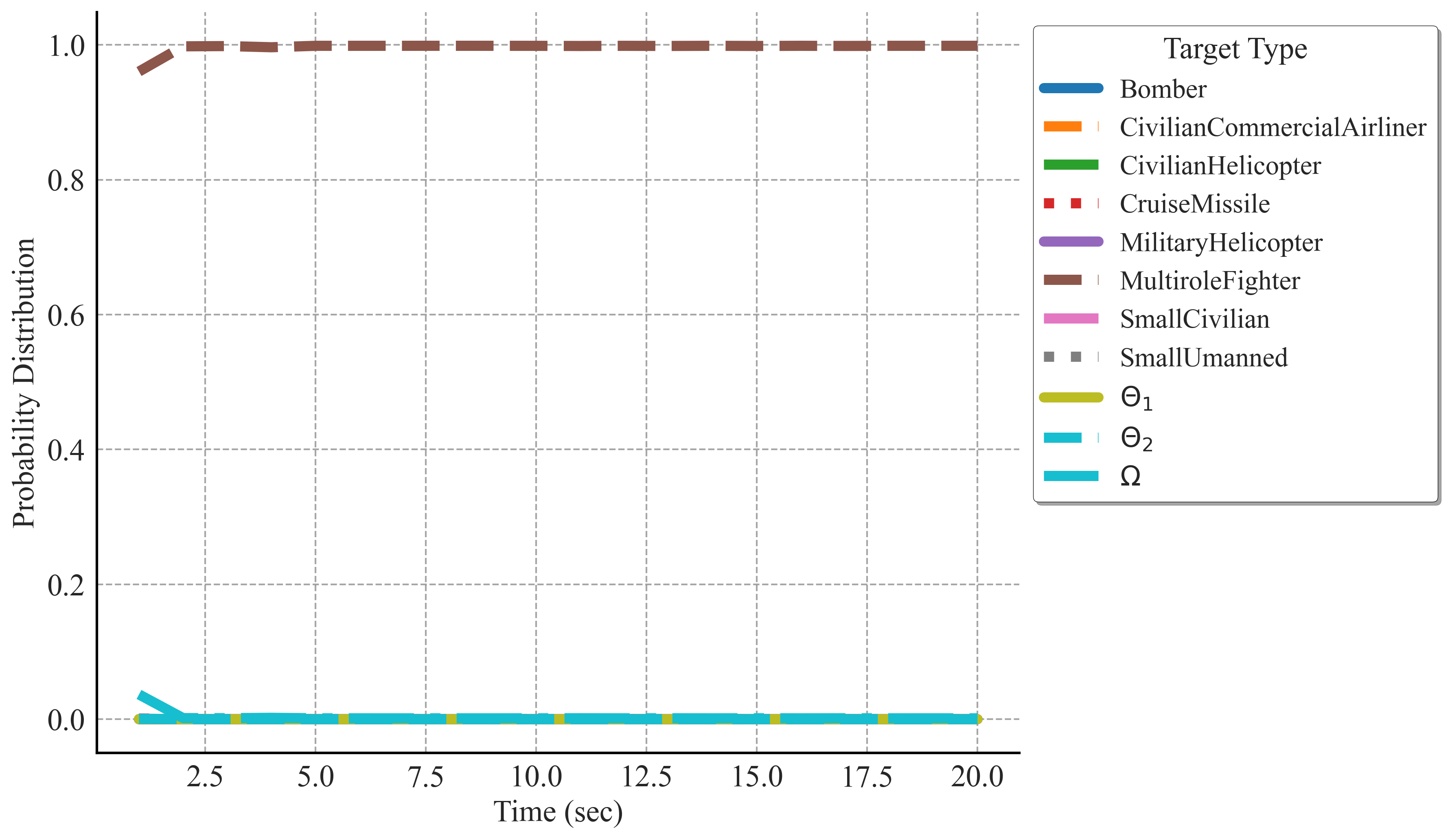}
        \caption{Type classification}
        \label{subfig:mf-vs-type}
    \end{subfigure}%
    \hfill
    \begin{subfigure}{0.32\textwidth}
        \centering
        \includegraphics[width=\linewidth]{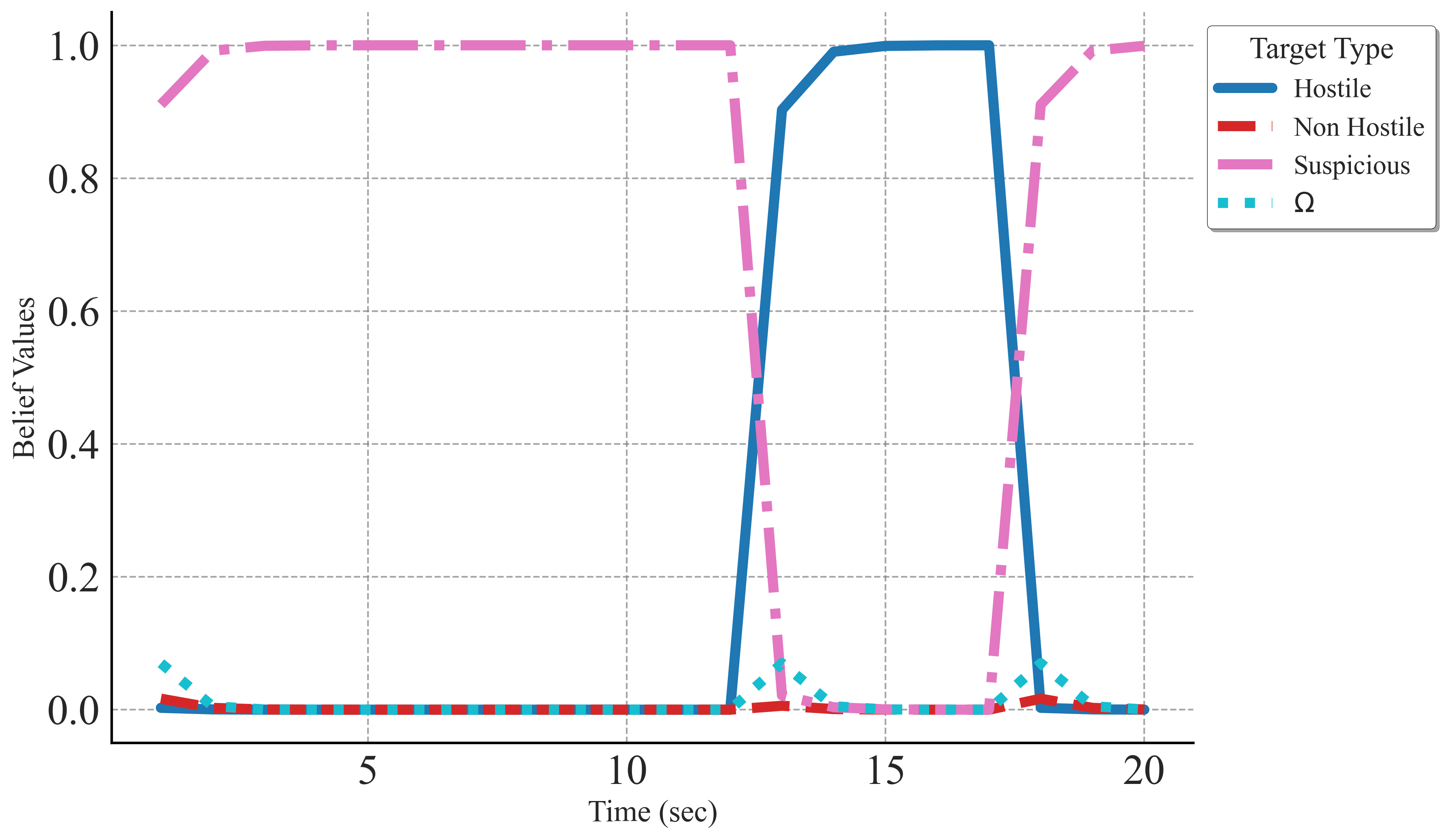}
        \caption{Intent prediction}
        \label{subfig:mf-vs-intent}
    \end{subfigure}

    \caption{Online classification results for (a–c) Air-to-air missile and (d–f) Multi-role fighter against ownship.}
    \label{fig:Online_Result}
\end{figure}
To demonstrate the effectiveness of the integrated target classification and intent prediction approach, two combat scenarios are considered, as shown in Fig.\ref{fig:Online_Result}. In the first scenario shown in Figs.\ref{subfig:aam-vs-own}-\ref{subfig:aam-vs-intent}, a one-to-one combat engagement between the ownship and an \texttt{Air-to-air missile} is presented. The corresponding classification results, also shown in Fig.\ref{subfig:aam-vs-type}, indicate that the \texttt{Air-to-air missile} is correctly classified, with the confidence in prediction increasing over time. The corresponding intent, depicted in Fig.\ref{subfig:aam-vs-intent}, is initially identified as \textit{Hostile} since the \texttt{Air-to-air missile} is approaching the ownship. However, once the target’s trajectory deviates away from the ownship, the intent is reassessed as \textit{Suspicious}. In the second scenario shown in Figs.\ref{subfig:mf-vs-own}-\ref{subfig:mf-vs-intent}, a combat engagement between the ownship and a \texttt{Multi-role fighter} is illustrated. As observed in Fig.\ref{subfig:mf-vs-type}, the target is correctly classified as a \texttt{Multi-role fighter}, with prediction confidence increasing over time. The intent is initially classified as \textit{Suspicious} because the does not display an \texttt{attacking} maneuver (Fig.~\ref{subfig:mf-vs-intent}). Later, as it begins to head towards the ownship, the intent is correctly reclassified as \textit{Hostile}. Subsequently, when the target moves away and no longer poses a threat, its predicted intent reverts to \textit{Suspicious} for the remainder of the observation period.

Finally, we analyze the execution time of Algorithm 1 to assess its computational efficiency. All reported run times correspond to the average per-subsample decision time (each subsample representing a 1-second segment), and the measured time reflects the total cost obtained after combining all processing stages—feature extraction, evaluation by the three classifiers, and fusion of their outputs.
The proposed approach is implemented in two widely used programming languages: Python and C++. As expected, the C++ implementation achieves faster execution, with an average runtime of $0.01,\mathrm{s}$ compared to $0.02,\mathrm{s}$ for Python. The experiments were conducted on a system equipped with an 11th Gen Intel(R) Core(TM) i9-12900 @ 2.50 GHz processor, an NVIDIA RTX 3060 GPU, and 32 GB RAM. The superior performance of C++ can be attributed to its compiled, low-level structure, which enables more efficient memory access and execution. This efficiency is especially critical in real-time applications, such as onboard aircraft platforms, where both processing speed and resource utilization are constrained. Beyond system-level evaluation, we also benchmarked the Python implementation across a range of computing platforms, from high-end to resource-limited devices, as shown in Fig.~\ref{fig:computa_python}. The results indicate that on high-end platforms, both classification time and memory usage are significantly reduced, highlighting the framework’s adaptability and efficiency across diverse computational environments. These findings not only validate the accuracy of the proposed method but also demonstrate its practical viability for real-time deployment in safety-critical systems.

\begin{figure}
    \centering
    \includegraphics[width=0.6\textwidth]{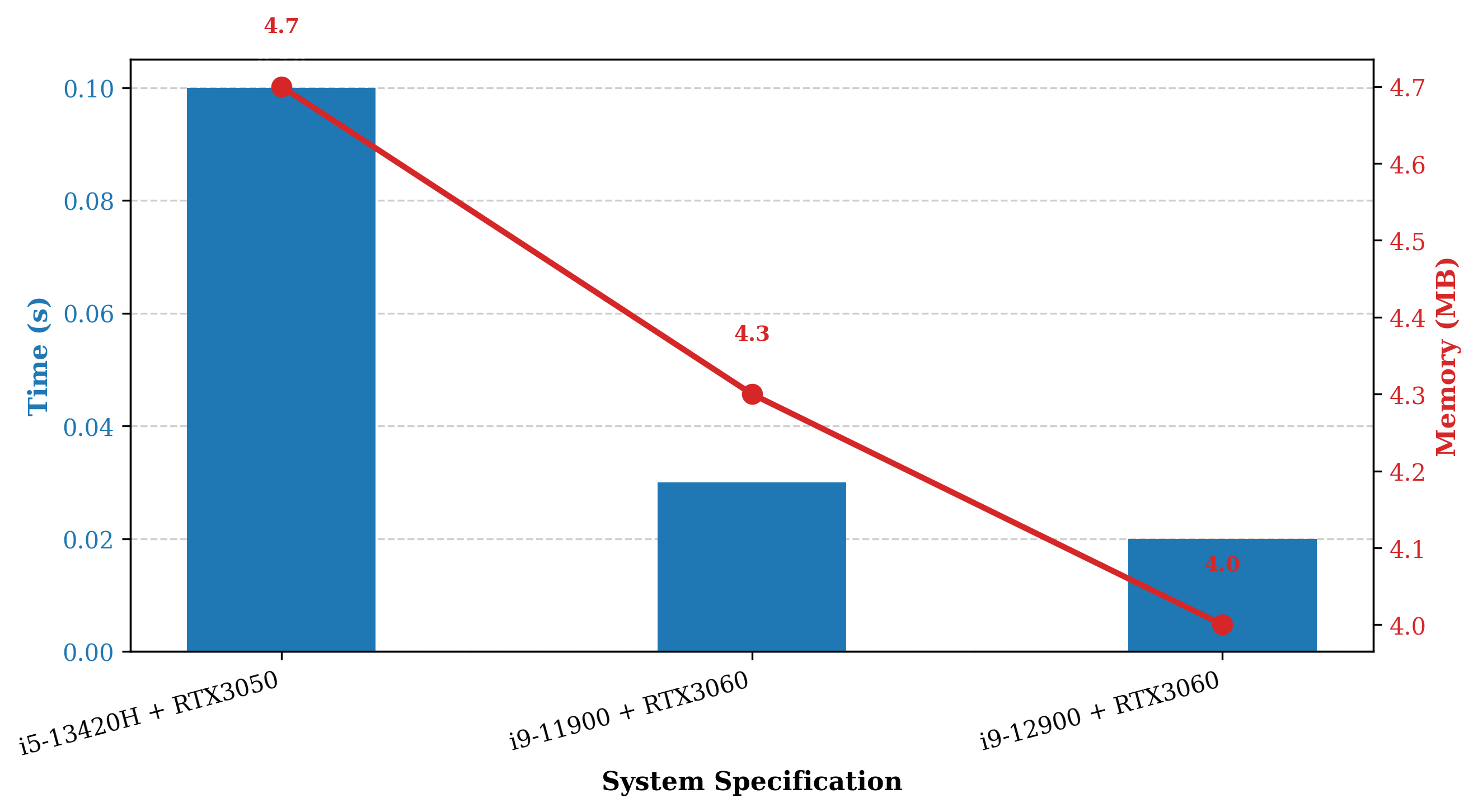}
    \caption{Computation time and memory usage across different computing platforms.}
    \label{fig:computa_python}
\end{figure}

\section{Conclusion and Future Work}
\label{sec:conclusion_and_future_works}
In this work, we address the challenge of aerial target classification and intent prediction under real-time constraints, where rapid and accurate decisions are essential. Conventional time-series classification approaches are unsuitable for combat scenarios, as they require full trajectory analysis and consequently introduce unacceptable delays. To overcome this limitation, we propose an integrated framework that enables target classification and intent prediction from partial data, supporting timely decision-making in situations that demand rapid tactical response. Instead of relying on complete time-series signals, predictions are generated on short sequential subsamples and refined through belief propagation. This approach enables both target type identification and behavioral intent prediction, categorizing aerial targets as hostile, non-hostile, or suspicious. To handle conflicting evidence in dynamic environments, we introduced a distance-based conflict-aware fusion mechanism that adaptively modulates the combination rule based on the level of disagreement among sources. Extensive simulations demonstrate that the proposed framework consistently outperforms traditional methods, maintaining high accuracy even in the presence of data ambiguity and enabling early detection of aerial targets. Beyond accuracy improvements, the core advantage of the framework lies in its ability to make stable, early predictions with limited observations—an essential requirement for real aerial combat, where pilots and autonomous systems must react within seconds. The conflict-aware fusion mechanism also offers a principled means to manage uncertainty, improving interpretability and trustworthiness of the predicted intent.Nevertheless, the current method assumes single-target engagement and does not explicitly model cooperative or deceptive behaviors, which are common in real combat situations. Sensor noise characteristics and maneuver unpredictability may also vary across platforms, requiring adaptive or learning-based calibration strategies. These limitations point toward promising research directions. Future work includes extending the framework to multi-target environments to support team-level tactical inference, incorporating online learning to adapt to evolving threat behaviors, and integrating domain-specific constraints from real combat operations to enhance deployment readiness.

\section*{Acknowledgment}
We sincerely thank the Aeronautical Development Agency, India, for funding this work and for providing valuable inputs on the behavior of various aerial targets. We greatly appreciate the contributions of Dr. Pavan Kumar Perepu, whose insights were invaluable to the successful completion of this study. We would also like to thank Kirtik Soni for implementing the proposed approach in C++. 

\bibliography{reference}

@article{SOUZA2018198,
title = {Asphalt pavement classification using smartphone accelerometer and Complexity Invariant Distance},
journal = {Engineering Applications of Artificial Intelligence},
volume = {74},
pages = {198-211},
year = {2018},
issn = {0952-1976},
doi = {https://doi.org/10.1016/j.engappai.2018.06.003},
url = {https://www.sciencedirect.com/science/article/pii/S0952197618301349},
author = {Vinicius M.A. Souza},
}

@inproceedings{frey2018f,
  title={F-35 information fusion},
  author={Frey, Thomas L and Aguilar, Chris and Engebretson, Kent and Faulk, David and Lenning, Layne G},
  booktitle={2018 Aviation Technology, Integration, and Operations Conference},
  pages={3520},
  year={2018}
}

@article{verbert2017bayesian,
  title={Bayesian and Dempster--Shafer reasoning for knowledge-based fault diagnosis--A comparative study},
  author={Verbert, Kim and Babu{\v{s}}ka, R and De Schutter, Bart},
  journal={Engineering Applications of Artificial Intelligence},
  volume={60},
  pages={136--150},
  year={201ftar7},
  publisher={Elsevier}
}

@inproceedings{ginoulhac2019target,
  title={Target classification based on kinematic data from AIS/ADS-B, using statistical features extraction and boosting},
  author={Ginoulhac, Rapha{\"e}l and Barbaresco, Fr{\'e}d{\'e}ric and Schneider, Jean-Yves and Pannier, Jean-Marie and Savary, S{\'e}bastien},
  booktitle={2019 20th International Radar Symposium (IRS)},
  pages={1--10},
  year={2019},
  organization={IEEE}
}

@inproceedings{ye2009time,
  title={Time series shapelets: a new primitive for data mining},
  author={Ye, Lexiang and Keogh, Eamonn},
  booktitle={Proceedings of the 15th ACM SIGKDD international conference on Knowledge discovery and data mining},
  pages={947--956},
  year={2009}
}

@article{hills2014classification,
  title={Classification of time series by shapelet transformation},
  author={Hills, Jon and Lines, Jason and Baranauskas, Edgaras and Mapp, James and Bagnall, Anthony},
  journal={Data mining and knowledge discovery},
  volume={28},
  pages={851--881},
  year={2014},
  publisher={Springer}
}

@article{schafer2015boss,
  title={The BOSS is concerned with time series classification in the presence of noise},
  author={Sch{\"a}fer, Patrick},
  journal={Data Mining and Knowledge Discovery},
  volume={29},
  pages={1505--1530},
  year={2015},
  publisher={Springer}
}

@article{gupta2020approaches,
  title={Approaches and applications of early classification of time series: A review},
  author={Gupta, Ashish and Gupta, Hari Prabhat and Biswas, Bhaskar and Dutta, Tanima},
  journal={IEEE Transactions on Artificial Intelligence},
  volume={1},
  number={1},
  pages={47--61},
  year={2020},
  publisher={IEEE}
}

@inproceedings{visser2003using,
  title={Using online learning to analyze the opponent’s behavior},
  author={Visser, Ubbo and Weland, Hans-Georg},
  booktitle={RoboCup 2002: Robot Soccer World Cup VI 6},
  pages={78--93},
  year={2003},
  organization={Springer}
}

@inproceedings{fukushima2018online,
  title={Online opponent formation identification based on position information},
  author={Fukushima, Takuya and Nakashima, Tomoharu and Akiyama, Hidehisa},
  booktitle={RoboCup 2017: Robot World Cup XXI 11},
  pages={241--251},
  year={2018},
  organization={Springer}
}

@article{smets2008transferable,
  title={The transferable belief model},
  author={Smets, Philippe and Kennes, Robert},
  journal={Classic Works of the Dempster-Shafer Theory of Belief Functions},
  pages={693--736},
  year={2008},
  publisher={Springer}
}

@book{shafer1976mathematical,
  title={A mathematical theory of evidence},
  author={Shafer, Glenn},
  volume={42},
  year={1976},
  publisher={Princeton university press}
}

@article{zadeh1986simple,
  title={A simple view of the Dempster-Shafer theory of evidence and its implication for the rule of combination},
  author={Zadeh, Lotfi A},
  journal={AI magazine},
  volume={7},
  number={2},
  pages={85--85},
  year={1986}
}

@article{kessentini2015dempster,
  title={A Dempster--Shafer theory based combination of handwriting recognition systems with multiple rejection strategies},
  author={Kessentini, Yousri and Burger, Thomas and Paquet, Thierry},
  journal={Pattern recognition},
  volume={48},
  number={2},
  pages={534--544},
  year={2015},
  publisher={Elsevier}
}

@article{cakir2008radar,
  title={Radar cross section (RCS) modeling and simulation, Part 2: A novel FDTD-based RCS prediction virtual tool for the resonance regime},
  author={Cakir, Gonca and Cakir, Mustafa and Sevgi, Levent},
  journal={IEEE Antennas and Propagation Magazine},
  volume={50},
  number={2},
  pages={81--94},
  year={2008},
  publisher={IEEE}
}

@article{cakir2014fdtd,
  title={An FDTD-based parallel virtual tool for RCS calculations of complex targets},
  author={Cakir, Gonca and Cakir, Mustafa and Sevgi, Levent},
  journal={IEEE Antennas and Propagation Magazine},
  volume={56},
  number={5},
  pages={74--90},
  year={2014},
  publisher={IEEE}
}

@inproceedings{sehgal2019automatic,
  title={Automatic target recognition using recurrent neural networks},
  author={Sehgal, Bharat and Shekhawat, Hanumant Singh and Jana, Sumit Kumar},
  booktitle={2019 International Conference on Range Technology (ICORT)},
  pages={1--5},
  year={2019},
  organization={IEEE}
}

@article{yager1987dempster,
  title={On the Dempster-Shafer framework and new combination rules},
  author={Yager, Ronald R},
  journal={Information sciences},
  volume={41},
  number={2},
  pages={93--137},
  year={1987},
  publisher={Elsevier}
}

@article{zhang2022information,
  title={An information fusion method based on deep learning and fuzzy discount-weighting for target intention recognition},
  author={Zhang, Zhuo and Wang, Hongfei and Geng, Jie and Jiang, Wen and Deng, Xinyang and Miao, Wang},
  journal={Engineering Applications of Artificial Intelligence},
  volume={109},
  pages={104610},
  year={2022},
  publisher={Elsevier}
}

@article{shin2018autonomous,
  title={An autonomous aerial combat framework for two-on-two engagements based on basic fighter maneuvers},
  author={Shin, Heemin and Lee, Jaehyun and Kim, Hyungi and Shim, David Hyunchul},
  journal={Aerospace Science and Technology},
  volume={72},
  pages={305--315},
  year={2018},
  publisher={Elsevier}
}

@article{dang2009coverage,
  title={Coverage-guided test generation for continuous and hybrid systems},
  author={Dang, Thao and Nahhal, Tarik},
  journal={Formal Methods in System Design},
  volume={34},
  pages={183--213},
  year={2009},
  publisher={Springer}
}

@article{zadeh1965fuzzy,
  title={Fuzzy sets},
  author={Zadeh, Lotfi A.},
  journal={Information and Control},
  volume={8},
  number={3},
  pages={338--353},
  year={1965},
  publisher={Elsevier}
}

@article{wang2023quick,
  title={Quick intention identification of an enemy aerial target through information classification processing},
  author={Wang, Yinhan and Wang, Jiang and Fan, Shipeng and Wang, Yuchen},
  journal={Aerospace Science and Technology},
  volume={132},
  pages={108005},
  year={2023},
  publisher={Elsevier}
}

@article{zhang2023combat,
  title={Combat Intention Recognition of Air Targets Based on 1DCNN-BiLSTM},
  author={Zhang, Chenhao and Zhou, Yan and Li, Hui and Liang, Futai and Song, Zihao and Yuan, Kai},
  journal={IEEE Access},
  volume={11},
  pages={134504--134516},
  year={2023},
  publisher={IEEE}
}

@article{gupta2019rule,
  title={Rule based classification of neurodegenerative diseases using data driven gait features},
  author={Gupta, Kartikay and Khajuria, Aayushi and Chatterjee, Niladri and Joshi, Pradeep and Joshi, Deepak},
  journal={Health and Technology},
  volume={9},
  pages={547--560},
  year={2019},
  publisher={Springer}
}

@inproceedings{dihel2023classifying,
  title={Classifying Aircraft using Velocity Data with Support Vector Machines and Likelihood Ratio Tests},
  author={Dihel, Logan and Dolph, Chester and Holbrook, Henry T and Roy, Sandip},
  booktitle={AIAA SCITECH 2023 Forum},
  pages={0898},
  year={2023}
}

@article{cemiloglu2025handling,
  title={Handling heterogeneity in Human Activity Recognition data by a compact Long Short Term Memory based deep learning approach},
  author={Cemiloglu, Ahmed and Akay, Bahriye},
  journal={Engineering Applications of Artificial Intelligence},
  volume={153},
  pages={110788},
  year={2025},
  publisher={Elsevier}
}

@article{jousselme2001new,
  title={A new distance between two bodies of evidence},
  author={Jousselme, Anne-Laure and Grenier, Dominic and Boss{\'e}, {\'E}loi},
  journal={Information fusion},
  volume={2},
  number={2},
  pages={91--101},
  year={2001},
  publisher={Elsevier}
}

@inproceedings{datagen,
   author = {Subash Kumaraguru and Devaprakash Muniraj},
   city = {Reston, Virginia},
   doi = {10.2514/6.2024-4582},
   isbn = {978-1-62410-716-0},
   booktitle = {AIAA AVIATION FORUM AND ASCEND 2024},
   month = {7},
   publisher = {American Institute of Aeronautics and Astronautics},
   title = {An Optimization-Based Approach for Generating Diverse Aerial Target Trajectories in a Combat Environment},
   year = {2024},
}

@article{wu2020benchmark,
  title={A benchmark data set for aircraft type recognition from remote sensing images},
  author={Wu, Zhi-Ze and Wan, Shou-Hong and Wang, Xiao-Feng and Tan, Ming and Zou, Le and Li, Xin-Lu and Chen, Yan},
  journal={Applied Soft Computing},
  volume={89},
  pages={106132},
  year={2020},
  publisher={Elsevier}
}

@article{singh2014dynamic,
  title={Dynamic classification of ballistic missiles using neural networks and hidden Markov models},
  author={Singh, Upendra Kumar and Padmanabhan, Vineet and Agarwal, Arun},
  journal={Applied Soft Computing},
  volume={19},
  pages={280--289},
  year={2014},
  publisher={Elsevier}
}

@article{wang2024intelligent,
  title={Intelligent recognition method of target tactical behavior intention in air combat based on deep learning},
  author={Wang, Xingyu and Yang, Zhen and Piao, Haiyin and Chai, Shiyuan and Huang, Jichuan and Zhou, Deyun},
  journal={Engineering Applications of Artificial Intelligence},
  volume={138},
  pages={109460},
  year={2024},
  publisher={Elsevier}
}

@article{aldhanhani2019framework,
  title={Framework for traffic event detection using Shapelet Transform},
  author={AlDhanhani, Ahmed and Damiani, Ernesto and Mizouni, Rabeb and Wang, Di},
  journal={Engineering Applications of Artificial Intelligence},
  volume={82},
  pages={226--235},
  year={2019},
  publisher={Elsevier}
}

@article{hui2017dempster,
  title={Dempster-Shafer evidence theory for multi-bearing faults diagnosis},
  author={Hui, Kar Hoou and Lim, Meng Hee and Leong, Mohd Salman and Al-Obaidi, Salah Mahdi},
  journal={Engineering Applications of Artificial Intelligence},
  volume={57},
  pages={160--170},
  year={2017},
  publisher={Elsevier}
}

@article{zhang2024target,
  title={A target intention recognition method based on information classification processing and information fusion},
  author={Zhang, Zhuo and Wang, Hongfei and Jiang, Wen and Geng, Jie},
  journal={Engineering Applications of Artificial Intelligence},
  volume={127},
  pages={107412},
  year={2024},
  publisher={Elsevier}
}

@article{oukhellou2010fault,
  title={Fault diagnosis in railway track circuits using Dempster--Shafer classifier fusion},
  author={Oukhellou, Latifa and Debiolles, Alexandra and Den{\oe}ux, Thierry and Aknin, Patrice},
  journal={Engineering Applications of Artificial Intelligence},
  volume={23},
  number={1},
  pages={117--128},
  year={2010},
  publisher={Elsevier}
}

\end{document}